  \newtheorem{theorem}{Theorem}[section]
\newtheorem{lemma}[theorem]{Lemma}
  \newtheorem{proposition}[theorem]{Proposition}
  \numberwithin{equation}{section}
  \newenvironment{proof}{\noindent{\bf Proof}\hspace*{1em}}{\hfill\qed\vspace{0.125in}}
\newcommand{\x}{\mathbf{x}}
  \newcommand{\y}{\mathbf{y}}
  \newcommand{\w}{\mathbf{w}}
\begin{document}
  \begin{frontmatter}

\title{Outliers and Random Noises in System Identification: a Compressed
Sensing Approach}

  \thanks{This paper was not presented in any IFAC meeting.
  This work was supported in
      part by NSF and DoE grants. The corresponding author ErWei Bai, 
     Tel:+3193355949 and Fax: +3193356028}

\author[First]{Weiyu Xu}
  \author[Second]{Erwei Bai}
  \author[Third]{Myung Cho}

  \address[First]{Dept. of Electrical and Computer Engineering \\
      University of Iowa, Iowa City, Iowa 52242}

  \address[Second]{Dept. of Electrical and Computer Engineering \\
      University of Iowa, Iowa City, Iowa 52242\\
    School of Electronics, Electrical Engineering and Computer Science
    \\
    Queen's University, Belfast, UK}
  \address[Third]{Dept. of Electrical and Computer Engineering \\
      University of Iowa, Iowa City, Iowa 52242}

\begin{keyword}
system identification, least absolute deviation, $\ell_1$ minimization,
Toeplitz matrix, compressed sensing
\end{keyword}

\begin{abstract}
In this paper, we consider robust system identification under sparse
outliers and random noises. In this problem, system parameters are observed
through a Toeplitz matrix. All observations are subject to random noises and a few are
corrupted with outliers. We reduce this problem of system identification to a
sparse error correcting problem using a Toeplitz structured real-numbered coding matrix.
We prove the performance guarantee of Toeplitz structured matrix in sparse error
correction. Thresholds on the percentage of correctable errors for Toeplitz
structured matrices are established. When both outliers and observation noise are
present, we have shown that the estimation error goes to $0$ asymptotically as long as
the probability density function for observation noise is
not ``vanishing'' around $0$.
No probabilistic assumptions are imposed on the outliers.
\end{abstract}

  \end{frontmatter}

\section{Introduction}
In a linear system identification setting, an unknown system parameter vector
$\x \in R^m$ is often observed through a Toeplitz matrix $H \in R^{n \times m}$ ($n \geq m$), namely
\begin{equation*}
\y=H\x,
\end{equation*}
where $\y=(y_1,y_2,...,y_n)^T$ is the system output and
the Toeplitz matrix $H$ is equal to
\begin{equation}
\label{Top}
          \begin{bmatrix}
                h_{-m+2} & h_{-m+3} & \ldots & h_{1}\\
                h_{-m+3} & h_{-m+4} & \ldots & h_{2}\\
                 \vdots& \ldots & \ldots & \vdots  \\
                 \vdots& \ldots & \ldots & \vdots  \\
                 \vdots& \ldots & \ldots & \vdots  \\
                                h_{-m+n+1} & \ldots & \ldots & h_{n}
\end{bmatrix},
\end{equation}
with $h_i$, $-m+2\leq i \leq n$, being the system input sequence.

If there is no interference or noise in the observation $\y$, one can then simply
recover $\x$ from matrix inversion. However, in applications, the observations $\y$ are corrupted by noises and a few elements can be exposed to
large-magnitude gross errors or outliers. Such outliers can happen with the failure of measurement devices, measurement
communication errors and the interference of adversary parties.
 Mathematically, when both additive observation noise and outliers are present,
the observation $\y$ can be written as
\begin{equation}
\y=H\x+\e+\w,
\label{eq:model}
\end{equation}
where $\e$ is a sparse outlier vector with $k \ll n$ non-zero elements, and
$\w$ is a measurement noise vector with each element usually being assumed to be
i.i.d. random variables. We further assume $m$ is fixed and $n$ can increase, which is often the case in system identifications \cite{Ljung}.

If only random measurement errors are present, the least-square solutions generally
provide an asymptotically good estimate. However, the least-square estimate
breaks down
in the presence of outliers.
Thus, it is necessary to protect the estimates from both random noise and outliers.
Research along this direction has attracted a significant amount of attention,
for example, \cite{Bai,cook,Neter,Rousseeuw,Ljung,Stoica}.
An effective way is to visually inspect the residual plot and change
the obviously erroneous measurements ``by hand" to an appropriately interpolated
values \cite{Ljung}. The approach does not however always work. Another approach
was the idea of few violated constraints \cite{Bai} in the setting of the bounded
error parameter estimation. The other two popular methods in the statistical
literature to deal with the outliers are the least median squares and the least
trimmed squares \cite{Rousseeuw}. Instead of minimizing the sum of the residual
squares,
the least median squares yields the smallest value for the median of squared
residuals computed from the entire data set and the least trimmed
squares tries to minimize the sum of squared
residues over a subset of the given data. Both have shown robustness against
the outliers \cite{Rousseeuw}. The problem is their computational complexity.
Both algorithms are nonlinear and in fact combinatory in nature. This limits
their practical applications if $n$ and/or $m$ are not small or even modest. The most popular
way to deal with the outliers in the statistical literature is the
the least absolute deviation estimate ($\ell_1$ minimization) which has been
extensively studied
\cite{SIC,CandesErrorCorrection,CT1,DMTSTOC,CDC2011}. Instead of searching
for all the $\binom{n}{k}$ possibilities for the locations of outliers,
\cite{SIC,CandesErrorCorrection,CT1} proposed to minimize the least absolute
deviation:
\begin{equation}
\min_{\x}  \|\y-H\x\|_{1}.
\label{eq:errorcorrection}
\end{equation}
Under the assumption that the error $\e+\w$ is an i.i.d.
random sequence with a common density which has median zero and is continuous
and positive in the neighborhood of zero, the difference between the unknown $\x$
and its estimate is asymptotically Gaussian of zero mean \cite{SIC}.
The problem is that the assumption of i.i.d. of median zero on the unknown
outliers is very restrictive and
seldomly satisfied in reality.
We study the least absolute deviation estimator or $\ell_1$ minimization
from the compressed sensing
point of view and show that i.i.d. of median zero on the outliers are
unnecessary. In fact only the number of outliers relative to the total number
of data length plays a role.

%Though to our knowledge that outlier problems have not been studied from the
%compressed sensing point of view,
%
Recovering signals from outliers or errors have been
studied \cite{CandesErrorCorrection,CT1,DMTSTOC,RudelsonVershynin,StojnicThresholds,XuHassibi,CDC2011}. In their setting, each element
of the nonsingular $(n-m) \times n$ matrix $A$ such that $AH=0$, is assumed to be i.i.d. random variables following a certain
distribution, for example, Gaussian distribution or Bernoulli distribution.
These types of matrices have been shown to obey certain conditions such as
restricted isometry conditions \cite{CandesErrorCorrection} so that
(\ref{eq:errorcorrection}) can correctly recover $\x$ when there are only
outliers present; and can recover $\x$ approximately when both outliers and
measurement noise exist. However, in the system identification problem, $H$ has
a natural Toeplitz structure and the elements of $H$ are not
independent but correlated. The natural
question is whether (\ref{eq:errorcorrection}) also provides performance
guarantee for recovering $\x$ with a Toeplitz matrix. We provide a positive
answer in this paper. 

Despite the fact that the elements of Toeplitz matrices are correlated,
we are able to show in this paper that Toeplitz structured matrices also enable the successful recovery of $\x$ by
using (\ref{eq:errorcorrection}). These results are first established for Toeplitz Gaussian matrices, where the system input sequence $h_i$, $-m+2\leq i \leq n$, are i.i.d. Gaussian random variables.
We then extend our results to Topelitz matrix with non-Gaussian input sequences. The main contribution of this paper is the
establishment of the performance guarantee of Toeplitz structured matrices in
parameter estimation in the presence of both outliers
and random noises. In particular, we calculated the thresholds on the
sparsity $k$ such that the parameter vector can be perfectly calculated in the
presence of a unknown outlier vector
with no more than $k$ non-zero elements
using (\ref{eq:errorcorrection}).  When both outliers and observation noise
are
present, we have shown that the estimation error goes to $0$ asymptotically as
long as
the probability density function for observation noise is not ``vanishing''
around $0$.

We also like to point out that
there is a well known duality between compressed sensing
\cite{Neighborlypolytope,DonohoTanner} and sparse error detection
\cite{CandesErrorCorrection,CT1}: the null space of sensing matrices in
compressed sensing corresponds to the tall matrix $H$ in sparse error
corrections. Toeplitz and circulant matrices have been studied in compressed
sensing in several papers
\cite{BajwaToeplitz,randomconvolution,Rauhut}.
In these papers, it has been shown that Toeplitz matrices are good for recovering
sparse vectors from undersampled measurements. In contrast, in our model of
parameter estimation in the presence of outliers, the signal itself is \emph{not}
sparse and the linear system involved is overdetermined rather underdetermined.
Also, the null space of a Toeplitz matrix does not necessarily
correspond to another Toeplitz matrix; so the problem studied in this paper is
essentially different from those studied
in \cite{BajwaToeplitz,randomconvolution,Rauhut}.

The rest of this paper is organized as follows.  In Section
\ref{sec:strongthreshold}, we derive performance bounds on the number of outliers
we can correct when only outliers are present. In Section
\ref{sec:noiseerrortogether}, we derive the estimation of system parameters
when both outliers and random noises are present. In Section \ref{sec:nonGaussian}, we extend our results to non-Gaussian inputs in system identification.  In Section \ref{sec:numerical}, we provide the numerical results and in Section \ref{sec:conclusion} conclude our paper
by discussing extensions and future directions.

\section{With Only Outliers}
\label{sec:strongthreshold}
We establish one main result regarding the threshold of successful recovery of
the system parameter vector $x$ by $\ell_1$-minimization.
\begin{theorem}
\label{thm:strongmain}
Let $H$ be an $n \times m$ Toeplitz Gaussian matrix as in (\ref{Top}), where $m$ is a
fixed positive integer and $h_i$, $-m+2 \leq i \leq n$ are i.i.d. $N(0,1)$
Gaussian random variables. Suppose that $\y=H\x+\e$, where $\e$ is a sparse
outlier vector with no more than $k$ non-zero elements. Then there exists a
constant $c_1 >0$ and a constant $\beta>0$ such that, with probability
$1-e^{-c_1n}$ as $n \rightarrow \infty$, the $n \times m$ Toeplitz matrix
$H$  has the following property.

For every $\x \in R^m$ and every error $\e$ with
its support $K$ satisfying $|K|=k \leq \beta n$, $\x$ is the unique solution
to (\ref{eq:errorcorrection}). Here the constant $0<\beta<1$ can be taken as
any number such that for some constant $\mu>0$ and $0<\delta<1$,
$$
\beta \log(1/\beta)+(1-\beta) \log(\frac{1}{1-\beta})+m\beta [\log(2)+\frac{m\mu^2
}{2}
$$
$$
+ \log(\Phi(\mu \sqrt{m}))]+(\frac{1}{2m-1}-\beta) [\log(2)
$$
$$
+
\frac{1}{2}\mu^2(1-\delta)^2+\log(1-\Phi(\mu(1-\delta)))]<0
$$
where $\Phi(t)=\frac{1}{\sqrt{2\pi}}\int_{-\infty}^{t}{e^{-\frac{x^2}{2}}\,dx}$
is the cumulative distribution function for the standard Gaussian random variable.
\end{theorem}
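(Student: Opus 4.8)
The plan is to recast exact recovery as a null space (balancedness) property of the range of $H$, and then to establish that property by a large–deviation estimate for one fixed configuration, amplified by a union bound over error supports and a covering of the parameter sphere. First I would write an arbitrary competitor as $\x=\x_0+\z$ and set $\bfv=H\z$, so that $\y-H\x=\e-\bfv$. The objective in (\ref{eq:errorcorrection}) then equals $\|\e-\bfv\|_1$ and equals $\|\e\|_1$ at $\z=\bfzero$, so $\x_0$ is the unique minimizer iff $\|\e-\bfv\|_1>\|\e\|_1$ for every nonzero $\bfv\in\operatorname{range}(H)$. Splitting coordinates into the support $K$ of $\e$ and its complement and using $|e_i-v_i|\ge |e_i|-|v_i|$ on $K$, this is implied by the uniform condition
\begin{equation}
\sum_{i\in K}|v_i|<\sum_{i\notin K}|v_i|\quad\text{for all }\bfv\in\operatorname{range}(H),\ \bfv\neq\bfzero .
\label{eq:nsp}
\end{equation}
By homogeneity it suffices to verify (\ref{eq:nsp}) for $\bfv=H\z$ with $\z$ on the unit sphere $\mathbb{S}^{m-1}$ and for every support $K$ with $|K|=k\le\beta n$.

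The structural feature I would exploit is that, for fixed unit $\z$, the coordinates $v_j=(H\z)_j=\sum_{l=1}^{m}h_{-m+j+l}\,z_l$ are standard Gaussians forming an $m$-dependent sequence: $v_j$ and $v_{j'}$ are independent whenever $|j-j'|\ge m$, since they then involve disjoint windows of the input $h_i$. For a fixed pair $(\z,K)$ I would bound $\Pr\!\big(\sum_{i\in K}|v_i|\ge\sum_{i\notin K}|v_i|\big)$ by a Chernoff / exponential-moment argument. On the complement I pass to a well-separated subset $S\subseteq K^{c}$ on which the $v_i$ are genuinely independent; accounting for the $m$-decorrelation length together with a guard region that also decouples $S$ from the corrupted block, and for the deletion of the $k=\beta n$ corrupted coordinates, leaves $|S|$ of order $(\tfrac{1}{2m-1}-\beta)n$. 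Tilting the independent sum $\sum_{i\in S}|v_i|$ downward with parameter $\mu(1-\delta)$ gives per-coordinate factor $\log2+\tfrac12\mu^2(1-\delta)^2+\log(1-\Phi(\mu(1-\delta)))=\log\E\,e^{-\mu(1-\delta)|v|}$ (with $v$ a standard Gaussian), weighted by $|S|/n$, which is exactly the last bracket of the claimed threshold. The corrupted coordinates carry internal $m$-dependence as well; bounding their upward-tilted exponential moment while accounting for the fact that each input sample $h_t$ feeds up to $m$ consecutive rows produces the rescaled tilt $\mu\sqrt m$ and multiplicity $m\beta$, i.e. the term $m\beta[\log2+\tfrac{m\mu^2}{2}+\log\Phi(\mu\sqrt m)]=m\beta\,\log\E\,e^{\mu\sqrt m|v|}$. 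Multiplying the two bounds gives, for fixed $(\z,K)$, a failure probability at most $e^{nG(\beta,\mu,\delta)}$ with $G$ equal to the last two brackets of the theorem.

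Next I would remove the dependence on $(\z,K)$. The number of supports is $\binom{n}{k}\le e^{\,n[\beta\log(1/\beta)+(1-\beta)\log\frac{1}{1-\beta}]}$, which is precisely the first two terms of the threshold; a union bound over supports multiplies the per-support estimate by this entropy factor. To pass from a fixed $\z$ to all of $\mathbb{S}^{m-1}$ I would cover the sphere by a finite $\epsilon$-net: since $m$ is fixed its cardinality is $(C/\epsilon)^{m}$, constant in $n$ and hence invisible to the exponential rate, while the margin $\|\bfv\|_1-2\sum_{i\in K}|v_i|$ is Lipschitz in $\z$ once the operator norm $\|H\|=O(\sqrt n)$ is controlled (which holds with overwhelming probability), so the discretization error is absorbed into the slack $\delta$. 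Collecting terms, the total failure probability is at most $e^{\,n[\text{entropy}+G(\beta,\mu,\delta)]}$; whenever this exponent is strictly negative for some admissible $\mu>0$ and $0<\delta<1$ — which is exactly the displayed inequality — it is $e^{-c_1n}\to0$, yielding the theorem with $c_1$ the absolute value of that exponent.

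The main obstacle, and the reason the exponents carry the extra factors $m$, $\sqrt m$ and $1/(2m-1)$ instead of the clean i.i.d. expressions, is precisely the correlation among the entries of $\bfv=H\z$: unlike the i.i.d. coding matrices of \cite{CandesErrorCorrection,CT1}, the Toeplitz structure makes the exponential moment of $\sum_{i\in K}|v_i|-\sum_{i\notin K}|v_i|$ fail to factorize, so the crux is the decorrelation step that trades a constant-factor loss in density — and hence in the correctable fraction $\beta$ — for genuine independence. A secondary technical point is making the net-plus-Lipschitz step uniform over all supports $K$ simultaneously, for which I would use that the margin is $1$-Lipschitz in $\bfv$ together with the concentration of the relevant operator norm of the Toeplitz Gaussian matrix $H$.
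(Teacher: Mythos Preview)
Your proposal is correct and follows essentially the paper's route: the balancedness criterion (Theorem~\ref{thm:balanced}), the same two decorrelation devices inside the Chernoff bound (Cauchy--Schwarz on the corrupted block, replacing $\sum_{i\in K}|(Hz)_i|$ by $\sqrt m\sum_{j\in J}|h_j|$ over the at most $mk$ inputs touching $K$, which is exactly what produces your ``rescaled tilt $\mu\sqrt m$ and multiplicity $m\beta$''; and a $(2m{-}1)$-separated subset of $K^c$ of size $\tfrac{n}{2m-1}-k$ giving genuinely independent $(Hz)_i$ for the complement), then the union over $\binom{n}{k}$ supports and a $\gamma$-net on $\mathbb S^{m-1}$. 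The only cosmetic difference is the net-to-sphere step: the paper controls $\|Hz\|_1$ by Gaussian Lipschitz concentration (Lemma~\ref{cor:s1}) and passes from the net to the full sphere via the geometric-series ``zooming'' expansion $z=\sum_{j\ge 0}\gamma_j v_j$ (Lemma~\ref{lemma:slp}), whereas you propose to bound the margin's Lipschitz constant through $\|H\|_{\mathrm{op}}=O(\sqrt n)$; both arguments close.
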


\textbf{Remark}: The derived correctable fraction of errors $\beta$ depends on the system dimension $m$.

We first outline the overall derivation strategy, and then go on to prove Theorem
\ref{thm:strongmain}.

Our derivation is based on checking the following
now-well-known theorem for $\ell_1$ minimization (see \cite{Yin}, for example).
\begin{theorem}
 (\ref{eq:errorcorrection}) can recover the parameter vector $\x$ exactly
whenever $\|\e\|_0 \leq k$, if and only if for every vector $z \in R^{m}\neq 0$,
$\|(Hz)_K\|_{1} < \|(Hz)_{\overline{K}}\|_{1}$ for every subset
$K \subseteq \{1,2,...,n\}$ with cardinality $|K|=k$,
where $\overline{K}=\{1,2,...,n\}\setminus K$.
\label{thm:balanced}
\end{theorem}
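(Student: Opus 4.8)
The plan is to recast exact recovery as a statement purely about objective values of (\ref{eq:errorcorrection}) along perturbation directions, and then read off both implications from a single decomposition of the $\ell_1$ norm across the support $K$ and its complement $\overline{K}$. Fix the true parameter $\x$ and an outlier $\e$ with $\|\e\|_0 \le k$; after padding its support to a set $K$ with $|K|=k$, we may assume $\e$ vanishes on $\overline{K}$. Writing any competing estimate as $\x+z$ with $z \in R^{m}$, the model $\y=H\x+\e$ gives the residual $\y-H(\x+z)=\e-Hz$, so the objective at $\x+z$ equals $\|\e-Hz\|_1$ while the objective at the truth is $\|\e\|_1$. Thus $\x$ is the unique minimizer of (\ref{eq:errorcorrection}) exactly when $\|\e-Hz\|_1 > \|\e\|_1$ for every $z \ne 0$, and the whole proof reduces to controlling this one inequality.

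For sufficiency I would split the residual over $K$ and $\overline{K}$. Since $\e$ is supported on $K$,
\[
\|\e-Hz\|_1 = \|\e_K-(Hz)_K\|_1 + \|(Hz)_{\overline{K}}\|_1 \ge \|\e_K\|_1 - \|(Hz)_K\|_1 + \|(Hz)_{\overline{K}}\|_1,
\]
where the last step is the reverse triangle inequality applied to the first term. If the balancedness hypothesis $\|(Hz)_K\|_1 < \|(Hz)_{\overline{K}}\|_1$ holds for every $z \ne 0$ and every size-$k$ set, the right-hand side strictly exceeds $\|\e_K\|_1=\|\e\|_1$, so $\x$ is the unique solution. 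A small bookkeeping point is the padding step: if $\mathrm{supp}(\e)$ has size strictly below $k$, I enlarge it to a set $K$ of size exactly $k$, which only increases $\|(Hz)_K\|_1$ and decreases $\|(Hz)_{\overline{K}}\|_1$; invoking the hypothesis on the enlarged set is therefore legitimate, and $\e$ still vanishes off $K$.

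Necessity I would prove by contraposition through an explicit failure instance. Suppose the inequality is violated, i.e. $\|(Hz)_K\|_1 \ge \|(Hz)_{\overline{K}}\|_1$ for some $z \ne 0$ and some $|K|=k$. Take any $\x$ and set $\e$ supported on $K$ with $\e_K=(Hz)_K$ and $\e_{\overline{K}}=0$, an admissible outlier since $\|\e\|_0 \le k$. Then the objective at $\x$ is $\|(Hz)_K\|_1$, whereas at the distinct point $\x+z$ the residual $\e-Hz$ vanishes on $K$ and equals $-(Hz)_{\overline{K}}$ on $\overline{K}$, giving objective $\|(Hz)_{\overline{K}}\|_1 \le \|(Hz)_K\|_1$. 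Hence $\x+z \ne \x$ attains an objective no larger than the truth, so $\x$ fails to be the unique minimizer and exact recovery breaks down. Both directions are elementary once the reduction is in place; the only real care needed is the quantifier bookkeeping over subsets of cardinality exactly $k$ together with the choice of sign pattern for $\e$ in the counterexample, rather than any genuine analytic difficulty.
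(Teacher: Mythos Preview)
Your argument is correct and is essentially the standard proof of this null-space/balancedness characterization. Note, however, that the paper does not supply its own proof of this theorem: it quotes the result as ``now-well-known'' and refers to \cite{Yin}. There is therefore no paper proof to compare against; your write-up simply fills in the cited argument, and both directions---the reverse triangle inequality for sufficiency and the explicit outlier $\e_K=(Hz)_K$ for necessity---are exactly the textbook route one finds in the reference.
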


The difficulty of checking this condition is that the elements of $H$ are not independent random variables and that the condition must hold for every vector in the subspace generated by $H$. We adopt the following strategy of discretizing the subspace generated by $H$,see \cite{DMTSTOC,StojnicXuHassibi,Ledoux01}.
%
%As assumed, $h=(h_{-m+2}, h_{-m+1}, ..., h_n)$ is a vector of i.i.d. Gaussian
%random variables following a distribution of $N(0,1)$. So we only need to look at for what $k$, this condition is satisfied with high probability.
It is obvious that we only need to consider $Hz$ for $z \in R^{m}$ with $\|z\|_2=1$. We then pick a finite set $V=\{v_1, ..., v_{N}\}$ called $\gamma$-net on $\{z|\|z\|_2=1\}$ for a constant $\gamma>0$: in a $\gamma$-net, for every point $z$ from $\{z|\|z\|_2=1\}$, there is a $v_l \in V$ such that $\|z-v_l\|_2 \leq \gamma$. We subsequently establish the property in Theorem \ref{thm:balanced} for all the points in $\gamma$-net $V$, before extending the property in Theorem \ref{thm:balanced} to $Hz$ with $\|z\|_2=1$ not necessarily belonging to the $\gamma$-net.

Following this method, we divide the derivation into Lemmas \ref{cor:s1},
\ref{lemma:blconcentration} and \ref{lemma:slp}, which lead to Theorem \ref{thm:strongmain}.

We now start to derive the results in Theorem \ref{thm:strongmain}.  We first show the concentration of measure phenomenon for $Hz$, where $z \in R^m$ is a single vector with $\|z\|_2=1$, in Lemmas \ref{cor:s1},
\ref{lemma:blconcentration}.

\begin{lemma}\label{cor:s1}
Let $\|z\|_2=1$. For any $\epsilon >0$, when $n$ is large enough, with probability at least $1-2e^{-\frac{n^2}{(n+m-1)m}}$, it
holds
that
$$
(1-\epsilon)S \leq \|Hz\|_1\leq (1+\epsilon)S
$$
where $S=nE\{|X|\}$ and $X$ is a random variable following the Gaussian
distribution $N(0,1)$. Namely there exists a constant $c_2>0$, such that
$$
(1-\epsilon)S \leq \|Hz\|_1\leq (1+\epsilon)S
$$
holds with probability $1-2e^{-c_2 n}$ as $n \rightarrow \infty$.
\end{lemma}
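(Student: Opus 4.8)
The plan is to realize $\|Hz\|_1$ as a Lipschitz function of the underlying independent Gaussian inputs and then invoke Gaussian concentration of measure. First I would record the mean. Writing $g=(h_{-m+2},\dots,h_n)\in\R^{\,n+m-1}$ for the vector of i.i.d.\ $N(0,1)$ inputs, the $j$-th entry of $Hz$ is the linear combination $(Hz)_j=\sum_{k=1}^{m}z_k h_{j+k-m}$, a zero-mean Gaussian with variance $\|z\|_2^2=1$; hence every coordinate of $Hz$ is standard normal and
$$
E\|Hz\|_1=\sum_{j=1}^{n}E|(Hz)_j|=nE\{|X|\}=S .
$$
Thus the claim is exactly a statement that $\|Hz\|_1$ concentrates around its mean $S$.

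The reason we cannot simply quote a sum-of-independent-variables bound is that the Toeplitz structure makes the coordinates $(Hz)_1,\dots,(Hz)_n$ dependent: two rows overlap whenever their indices differ by at most $m-1$. I would therefore treat $f(g)=\|Hz\|_1$ directly as a function of the $n+m-1$ independent coordinates of $g$ and bound its Lipschitz constant. The map $g\mapsto Hz$ is linear and banded: each input $h_i$ enters at most $m$ of the rows, always with a coefficient drawn from $\{z_1,\dots,z_m\}$. Since the $\ell_1$ norm is $1$-Lipschitz with respect to itself, away from the non-differentiable set one has $\partial f/\partial h_i=\sum_{j}\mathrm{sgn}((Hz)_j)\,z_{\,i-j+m}$, so $|\partial f/\partial h_i|\le\|z\|_1\le\sqrt m$ and hence $\|\nabla f\|_2^2\le (n+m-1)m$; equivalently, two applications of Cauchy--Schwarz give $|f(g)-f(g')|\le\sqrt{nm}\,\|g-g'\|_2$. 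This Lipschitz estimate, which controls how the banded/Toeplitz coupling spreads a perturbation of one Gaussian across the $m$ rows it touches, is the main technical step, and I expect it to be the only real obstacle.

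With the Lipschitz bound in hand the conclusion follows from the Gaussian (Borell--TIS) concentration inequality: for an $L$-Lipschitz $f$ and a standard Gaussian vector, $P(|f-Ef|\ge t)\le 2\exp(-t^2/(2L^2))$. Taking $t=\epsilon S=\epsilon n E\{|X|\}$ and $L^2\le (n+m-1)m$ yields
$$
P\Bigl(\bigl|\,\|Hz\|_1-S\,\bigr|\ge \epsilon S\Bigr)\le 2\exp\!\Bigl(-\frac{\epsilon^2 (E\{|X|\})^2\,n^2}{2(n+m-1)m}\Bigr),
$$
which is of the stated form $2e^{-n^2/((n+m-1)m)}$ up to the fixed constant $\tfrac12\epsilon^2(E\{|X|\})^2$. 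Because $n^2/((n+m-1)m)$ grows linearly in $n$ for fixed $m$, the exponent exceeds $c_2 n$ for some $c_2>0$ once $n$ is large, giving the second form $1-2e^{-c_2 n}$ together with the two-sided bound $(1-\epsilon)S\le\|Hz\|_1\le(1+\epsilon)S$.
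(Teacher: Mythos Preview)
Your proposal is correct and follows essentially the same route as the paper: compute $E\|Hz\|_1=S$, bound the Lipschitz constant of $h\mapsto\|Hz\|_1$ by $\sqrt{(n+m-1)m}$ (the paper does this via the triangle inequality and Cauchy--Schwarz rather than your gradient argument, but the two yield the same constant), and then apply Gaussian concentration for Lipschitz functions. One minor slip: where you write $|f(g)-f(g')|\le\sqrt{nm}\,\|g-g'\|_2$ the constant should be $\sqrt{(n+m-1)m}$, as you in fact use two lines later.
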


\begin{proof}:
Our derivations rely on the concentration of
measure inequalities and the Chernoff bounds for Gaussian random variables
\cite{Ledoux01,LedouxTalagrand1991} .

  \begin{proposition}(Gaussian concentration inequality for Lipschitz functions)
  Let $f: R^d\rightarrow R$ be a function which is Lipschitz with constant $L$
(i.e. for all $a \in R^d$ and $b \in R^d$, $|f(a)-f(b)|\leq L\|a-b\|_2$).
Then for any $t$, we have
  \begin{equation*}
  P(|f(X)-E\{f(X)\}|\geq t) \leq 2e^{-\frac{t^2}{2L^2}},
  \end{equation*}
  where $X$ is a vector of $d$ i.i.d. standard Gaussian random variables $N(0,1)$.
  \end{proposition}

We show that, for any $\|z\|_2=1$, the function $f(h)=\|Hz\|_1$ is a function of
Lipschitz constant $\sqrt{m(n+m-1)}$, where $h=(h_{-m+2}, h_{-m+1}, ..., h_n)$.
  For two vectors $h^1$ and $h^2$, by the triangular inequality for $\ell_1$ norm
and the Cauchy-Shwarz inequality,
  \begin{eqnarray*}
  |f(h^1)-f(h^2)|&\leq& \sum_{i=-m+2}^{n}|(h^1)_i-(h^2)_i| \times \|z\|_1\\
   &\leq&\sqrt{n+m-1} \|h^1-h^2\|_2 \sqrt{m} \|z\|_2\\
   &=&\sqrt{(n+m-1)m}\|h^1-h^2\|_2,
  \end{eqnarray*}
since $\|z\|_2=1$. Then a direct application of the Gaussian concentration
inequality above leads us to Lemma \ref{cor:s1}.
  \end{proof}

\begin{lemma}
Let $\|z\|_2=1$ and $0<\delta<1$ be a constant. Then there exists a threshold
$\beta \in (0,1)$ and a constant $c_3>0$ (depending on $m$ and $\beta$), such
that, with a probability $1-e^{-c_3n}$, for all subsets $K\subseteq \{1,2,...,n\}$
with cardinality $\frac{|K|}{n}\leq \beta$,
\begin{equation*}
\|(Hz)_K\|_1 \leq \frac{1-\delta}{2-\delta} \|Hz\|_1.
\end{equation*}
\label{lemma:blconcentration}
\end{lemma}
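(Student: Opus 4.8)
The plan is to reduce the statement to a fixed-$K$ large-deviation estimate and then union-bound over the admissible supports. First I would rewrite the target inequality additively: with $a=\|(Hz)_K\|_1$ and $b=\|(Hz)_{\bar K}\|_1$, so that $\|Hz\|_1=a+b$, the bound $a\le\frac{1-\delta}{2-\delta}(a+b)$ is equivalent to
$$\|(Hz)_K\|_1\le(1-\delta)\,\|(Hz)_{\bar K}\|_1 .$$
Since $\frac{1-\delta}{2-\delta}<\tfrac12$ for $\delta\in(0,1)$, this also supplies the strict margin that Theorem~\ref{thm:balanced} later requires. It then suffices to bound, for each fixed $K$ with $|K|=k\le\beta n$, the probability that this fails, and to multiply by the number $\sum_{k\le\beta n}\binom nk$ of admissible supports, whose exponential growth rate $\beta\log(1/\beta)+(1-\beta)\log\frac{1}{1-\beta}$ is exactly the first two terms of the displayed condition in Theorem~\ref{thm:strongmain}. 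The obstacle throughout is that the coordinates $(Hz)_i=\sum_{j=1}^{m}h_{-m+i+j}z_j$, although each $N(0,1)$, are \emph{not} independent: $(Hz)_i$ and $(Hz)_{i'}$ share input samples whenever $|i-i'|<m$.

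For the on-support side I would give up on the rows and pass to the underlying inputs, which are genuinely independent. By the triangle inequality and Cauchy--Schwarz ($\|z\|_1\le\sqrt m\,\|z\|_2=\sqrt m$),
$$\|(Hz)_K\|_1\le\sqrt m\sum_{\ell\in W}|h_\ell| ,$$
where $W$ is the set of inputs appearing in the windows of the rows indexed by $K$, so $|W|\le m|K|\le m\beta n$. Because the $h_\ell$ are i.i.d.\ $N(0,1)$, a Chernoff bound factorizes exactly; using the elementary identity $E\{e^{s|X|}\}=2e^{s^2/2}\Phi(s)$ with $s=\mu\sqrt m$ yields the factor $\big(2e^{m\mu^2/2}\Phi(\mu\sqrt m)\big)^{|W|}$, whose exponent per unit $n$ is the term $m\beta\big[\log2+\frac{m\mu^2}{2}+\log\Phi(\mu\sqrt m)\big]$.

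For the off-support side the triangle inequality points the wrong way, so I must keep genuine row variables but force them to be independent. I would extract a subset $\bar K'\subseteq\bar K$ of rows that are pairwise at distance $\ge m$ \emph{and} at distance $\ge m$ from every row of $K$; then $\{(Hz)_i\}_{i\in\bar K'}$ are i.i.d.\ $N(0,1)$ and, crucially, are functions of inputs lying outside $W$. A greedy packing shows one can take $|\bar K'|\ge\big(\frac{1}{2m-1}-\beta\big)n$, since each selected row and each row of $K$ forbids an interval of length $2m-1$; this is the origin both of the constant $\frac{1}{2m-1}$ and of the implicit requirement $\beta<\frac{1}{2m-1}$. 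Bounding $\|(Hz)_{\bar K}\|_1\ge\sum_{i\in\bar K'}|(Hz)_i|$ and applying the same MGF identity with $s=-\mu(1-\delta)$ gives the factor $\big(2e^{\mu^2(1-\delta)^2/2}(1-\Phi(\mu(1-\delta)))\big)^{|\bar K'|}$, that is, the term $\big(\frac{1}{2m-1}-\beta\big)\big[\log2+\frac12\mu^2(1-\delta)^2+\log(1-\Phi(\mu(1-\delta)))\big]$.

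Finally I would combine the two halves. Because $\bar K'$ was chosen so that its input windows are disjoint from $W$, the quantities $\sqrt m\sum_{\ell\in W}|h_\ell|$ and $\sum_{i\in\bar K'}|(Hz)_i|$ depend on disjoint families of independent Gaussians and are therefore independent; the failure event is contained in $\{\sqrt m\sum_{\ell\in W}|h_\ell|\ge(1-\delta)\sum_{i\in\bar K'}|(Hz)_i|\}$, whose probability factorizes under a single Chernoff parameter $t=\mu$. Multiplying the two exponents (which then add) and the support-counting factor produces a failure probability at most $e^{nG(\beta,\mu,\delta)}$, where $G$ is precisely the left-hand side of the displayed inequality; any $\beta$ and $\mu$ making $G<0$ give the conclusion with $c_3=-G>0$. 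The main difficulty lies entirely in the dependence, and the whole argument hinges on decoupling the two sides: replacing the on-support rows by their independent inputs, and selecting an off-support independent row set that simultaneously avoids those inputs, so that the final estimate is a clean product of two independent tails.
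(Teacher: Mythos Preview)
Your proposal is correct and follows essentially the same route as the paper: bound $\|(Hz)_K\|_1$ by $\sqrt{m}\sum_{\ell\in W}|h_\ell|$ via Cauchy--Schwarz on the inputs, extract from $\bar K$ a packed subfamily of at least $\big(\tfrac{1}{2m-1}-\beta\big)n$ rows whose windows are mutually disjoint and disjoint from $W$, apply a single-parameter Chernoff bound with the identity $E\{e^{s|X|}\}=2e^{s^2/2}\Phi(s)$, and finish with the union bound over $\binom{n}{k}$ supports. The additive reformulation $a\le(1-\delta)b$, the decoupling of the two tails through disjoint input families, and the resulting exponent $G(\beta,\mu,\delta)$ all match the paper's argument line for line.
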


\begin{proof}:
  Note that for a vector $z$ from the unit Euclidean norm in $R^m$, we have
$\sum\limits_{i \in K}   {|(Hz)_i|} \leq \sum\limits_{i \in K} \sum\limits_{j=1}^{m}|H_{i,j}||z_j|$,
and by the Cauchy-Schwarz   inequality,
  \begin{equation*}
  \sum_{i \in K}\sum_{j=1}^{m}|H_{i,j}||z_j| \leq \sum_{j \in J} \sqrt{m} |h_j|,
  \end{equation*}
  where $h_j$ is an element of $h$ and $J \subseteq \{-m+2, ...,n\}$ is the set of
indices   $j$ such that $h_j$ is involved in $H_{K}$. So the cardinality
$|J| \leq mk$. By the definiiton of Toeplitz matrices, the number of rows in $H$ that involve only $h_j$'s with $j$ coming from $\{-m+2, ...,n\} \setminus J$,
is at least $n-k\times(2m-1)$. Among these rows involving only $h_j$'s with $j$ coming from $\{-m+2, ...,n\} \setminus J$, one can pick at least $\frac{n-k\times(2m-1)}{2m-1}$ rows, such that these rows involve distinct $h_j$'s from each other.

Thus for a fixed vector $z$, there exists a set
$I \subseteq \{1,2,...,n\}\setminus K$ with cardinality at least
$\frac{n-k\times(2m-1)}{2m-1}$ such that $(Hz)_i$, $i \in I$, are independent
$N(0,1)$ Gaussian variables; moreover, these $(Hz)_i$, $i \in I$, are independent
from those $h_j$'s with $j \in J$.
  Thus for a fixed set $K$, the probability that $\|(Hz)_K\|_1 >
\frac{1-\delta}{2-  \delta} \|Hz\|_1$ is smaller than the probability that
  \begin{equation*}
  \sqrt{m}\sum_{i=1}^{mk} |h_i| \geq (1-\delta)
\sum_{j=1}^{\frac{n-k\times(2m-1)}{2m-1}}|h_  j|,
  \end{equation*}
  where $h_i$'s and $h_j$'s are all i.i.d. $N(0,1)$ Gaussian random variables.

  Now we use the Chernoff bound,
  \begin{eqnarray*}
  &&P(\sqrt{m}\sum_{i=1}^{mk} |h_i| \geq (1-\delta)\sum_{j=1}^{\frac{n-k\times(2m-1)}{2m-1}  }|h_j| )   \\
  &\leq& \min_{\mu \geq 0}E\left\{e^{\mu(\sqrt{m}\sum\limits_{i=1}^{mk} |h_i| - (1-\delta)\sum\limits_{j= 1}^{\frac{n-k\times(2m-1)}{2m-1}} |h_j|)}\right\}\\
  &=& \min_{\mu \geq 0}E\left\{ e^{\mu(\sqrt{m}\sum\limits_{i=1}^{mk} |h_i|)} \right\} E\left\{ e^  {- \mu(1-\delta)\sum\limits_{i=1}^{\frac{n}{2m-1}-k} |h_i|}\right\}
  \end{eqnarray*}

  After simple algebra, we have
$$
E\left\{ e^{\mu(\sqrt{m} |h_i|)} \right\}
= 2e^{\frac{\mu^2   m}{2}} F(\mu\sqrt{m})
$$
and
  $$
E\left\{ e^{-\mu(1-\delta)|h_i|} \right\}= 2e^{\frac{\mu^2(1-\delta)^2}{2}}
(1-F(\mu(1-  \delta)))
$$
where $F(t)$ is the cumulative distribution function for a standard Gaussian
random variable $N(0,1)$.

  Putting this back into the Chernoff bound and notice that
  there are at most $\binom{n}{k}$ possible support sets $K$ with cardinality
$k$, the probability $P$ that $\sqrt{m}\sum\limits_{i=1}^{mk} |h_i| \geq
\beta\sum\limits_{i=1}^{\frac{n-k\times(2m-1)}{2m-1}}|h_i|$ is violated for at least
one support set $K$ is upper bounded by the union bound
  \begin{eqnarray*}
&& \log\left(\binom{n}{k}\right)+ mk[\log(2)+\frac{m\mu^2}{2}+
\log(F(\mu\sqrt{m}))]\\
&&+(\frac{n}{2m-1}-k) [\log(2)+ \frac{\mu^2(1-\delta)^2}{2}+\log(Q((1-\delta)u))],
\end{eqnarray*}
where $Q((1-\delta)u)=1-F((1-\delta)u)$ is the Gaussian tail function.

Let $k=\beta n$, and thus $\log(\binom{n}{k})/n \rightarrow H(\beta)$ as
$n \rightarrow \infty$, where $H(\beta)=\beta \log(1/\beta)+(1-\beta)
\log(\frac{1}{1-\beta})$ is the entropy function. As long as, for a certain $\mu>0$,
\begin{eqnarray*}
&&H(\beta)+m\beta \times [\log(2)+\frac{m\mu^2}{2}+ \log(F(\mu \sqrt{m}))]+\\
&&(\frac{1}{2m-1}-\beta)[\log(2)+\frac{\mu^2(1-\delta)^2}{2}+\log(Q(\mu(1-\delta)))]\\
&&<0,
\end{eqnarray*}
then $\beta$ is within the correctable error region for all the support sets
$K$ with high probability. In fact, the last quantity can always be made
smaller than $0$ if we take $\beta$ small enough. To see this, we can first take $\mu$ large enough, such that
$[\log(2)+\frac{1}{2}\mu^2(1-\delta)^2 +\log(1-F(\mu(1-\delta)))]<0$. This is always doable, because the tail function $1-F(\mu(1-\delta))$
satisfies
$$ 1-F(\mu(1-\delta)) \leq \frac{\frac{1}{\sqrt{2\pi}}e^{-\frac{\mu^2(1-\delta)^2}{2}}}{\mu(1-\delta)}$$
for $\mu(1-\delta)>0$.
Fixing that $\mu$, we can then take $\beta$ sufficiently small such that $\log(P)<0$, as $n \rightarrow \infty$.
\end{proof}

We have so far only considered the condition in Theorem \ref{thm:balanced} for a single point $z$ on the $\gamma$-net. By a union bound on the size of $\gamma$-net, Lemma \ref{cor:s1} and \ref{lemma:blconcentration} indicate that, with overwhelming
probability, the recovery condition in Theorem \ref{thm:balanced} holds for the discrete points on $\gamma$-net.
The following lemma formally proves this fact, and then extends the proof of the recovery condition for every point in the set $\{z|\|z\|_2=1\}$.
\begin{lemma}\label{lemma:slp}
There exist a constant $c_4>0$ such that when $n$ is large enough, with probability $1-e^{-c_4n}$, the Gaussian Toeplitz matrix $H$ has the following property: for every $z \in R^m$ and every subset $K
\subseteq \{1,...,n\}$ with $|K| \leq \beta n$, $\sum \limits_ {i \in
\overline{K}} |(Hz)_i| - \sum \limits_{i \in K} |(Hz)_i| \geq \delta' S
$, where $\delta'>0$ is a constant.
\end{lemma}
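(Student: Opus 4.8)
The plan is to assemble Lemmas \ref{cor:s1} and \ref{lemma:blconcentration} over the $\gamma$-net and then remove the discretization. Since the quantity in question is homogeneous of degree one in $z$, it suffices to prove the bound for unit vectors $z$ with $\|z\|_2=1$; the general nonzero case follows by rescaling, which is all the strict inequality of Theorem \ref{thm:balanced} requires. For a single fixed unit vector $v$, Lemma \ref{lemma:blconcentration} gives $\|(Hv)_K\|_1\le\frac{1-\delta}{2-\delta}\|Hv\|_1$ simultaneously for every $K$ with $|K|\le\beta n$, and Lemma \ref{cor:s1} gives $\|Hv\|_1\ge(1-\epsilon)S$. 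Writing the target as $\|Hv\|_1-2\|(Hv)_K\|_1$, I would combine these on the intersection of the two events to obtain
\[
\sum_{i\in\overline{K}}|(Hv)_i|-\sum_{i\in K}|(Hv)_i| \;=\; \|Hv\|_1-2\|(Hv)_K\|_1 \;\ge\; \frac{\delta}{2-\delta}\|Hv\|_1 \;\ge\; \frac{\delta(1-\epsilon)}{2-\delta}\,S ,
\]
simultaneously for all admissible $K$, where each of the two events fails with probability at most $e^{-cn}$ for a suitable $c>0$.

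Next I would union-bound over the net. Because $m$ is fixed, a standard covering argument lets me choose the $\gamma$-net $V$ on the unit sphere of $R^m$ with $|V|=N\le(1+2/\gamma)^m$, a constant independent of $n$. A union bound over the $N$ net points then makes the gap bound above hold for every $v_l\in V$ and every $K$ with $|K|\le\beta n$ at once, outside an event of probability at most $N\cdot e^{-cn}=e^{-cn+\log N}\le e^{-c'n}$ for large $n$, since $\log N$ is a constant.

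The remaining, and main, step is to pass from net points to an arbitrary unit vector $z$. Fixing $K$ and setting $g_K(z)=\sum_{i\in\overline{K}}|(Hz)_i|-\sum_{i\in K}|(Hz)_i|$, the reverse triangle inequality gives, for any net point $v_l$ with $\|z-v_l\|_2\le\gamma$,
\[
|g_K(z)-g_K(v_l)| \;\le\; \sum_{i=1}^{n}\bigl|\,|(Hz)_i|-|(Hv_l)_i|\,\bigr| \;\le\; \|H(z-v_l)\|_1 ,
\]
a bound uniform in $K$. To control $\|H(z-v_l)\|_1$ I would use the Frobenius norm: writing $h_i$ for the $i$-th row of $H$, Cauchy--Schwarz gives $\|Hw\|_1\le\sum_i\|h_i\|_2\|w\|_2\le\sqrt{n}\,\|H\|_F\,\|w\|_2$, and since $\|H\|_F^2$ is a weighted sum of order $n$ squared standard Gaussians with multiplicities bounded by $m$, a $\chi^2$ Chernoff tail yields $\|H\|_F\le C\sqrt{n}$ with probability $1-e^{-c''n}$. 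Hence $\|H(z-v_l)\|_1\le Cn\gamma$, which is of the same order $\Theta(n)=\Theta(S)$ as the gap.

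Choosing the constant $\gamma$ small enough that $Cn\gamma$ stays below half of $\frac{\delta(1-\epsilon)}{2-\delta}S$, the perturbation cannot destroy the positive gap, and I obtain $g_K(z)\ge\delta'S$ with $\delta'=\tfrac12\frac{\delta(1-\epsilon)}{2-\delta}$ for every unit $z$ and every admissible $K$. Intersecting the three exponentially small failure events (the two concentration events across the net, and the Frobenius-norm event) gives the conclusion with some $c_4>0$. The delicate point, and the reason a uniform rather than per-vector bound on $H$ is needed, is that $\gamma$ must be a fixed constant so that $N$ stays bounded and the union bound survives, while simultaneously being small enough that the $O(n\gamma)$ Lipschitz perturbation is dominated by the $\Theta(n)$ gap secured on the net.
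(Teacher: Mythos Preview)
Your proof is correct and follows the same overall architecture as the paper (establish the gap on a $\gamma$-net via Lemmas \ref{cor:s1} and \ref{lemma:blconcentration}, union-bound over the $(1+2/\gamma)^m$ net points, then extend to all unit vectors), but the extension step is handled differently. The paper does not bound $\|H(z-v_l)\|_1$ via an operator/Frobenius norm estimate; instead it writes $z=\sum_{j\ge 0}\gamma_j v_j$ with $\gamma_j\le\gamma^j$ and each $v_j$ in the net, and then bounds $\|(Hz)_K\|_1$ from above and $\|Hz\|_1$ from below termwise using only the net estimates $\|(Hv_j)_K\|_1\le\frac{(1-\delta)(1+\epsilon)}{2-\delta}S$ and $(1-\epsilon)S\le\|Hv_j\|_1\le(1+\epsilon)S$, summing the geometric series. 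That successive-approximation trick keeps everything inside the two concentration events already obtained and avoids introducing a third Frobenius-norm event; in exchange it needs the \emph{upper} bound $\|Hv_j\|_1\le(1+\epsilon)S$ from Lemma \ref{cor:s1}, which your argument does not use. Your approach is arguably more direct: the Lipschitz bound $|g_K(z)-g_K(v_l)|\le\|H(z-v_l)\|_1$ reduces the extension to a single uniform operator-type bound, at the modest cost of one extra $\chi^2$ tail estimate on $\|H\|_F^2$. Both lead to the same conclusion with the same quantifier structure (constant $\gamma$ chosen after $\delta,\epsilon$, so $\log N$ stays bounded).
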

\begin{proof} For any given $\gamma>0$, there exists a $ \gamma$-net $V=\{v_1, ..., v_{N}\}$ of
cardinality less than $(1+\frac{2}{\gamma})^m$\cite{Ledoux01}. Since each row of $H$ has $m$ i.i.d $N(0,1)$ entries, elements of $Hv_j$, $1\leq j\leq N $, are
(not independent) $N(0,1)$ entries. Applying a union bound on the size of $\gamma$-net, Lemmas \ref{lemma:blconcentration}
and \ref{cor:s1} imply that for every $v_j \in V$, for some $\delta>0$ and for any constant
$\epsilon>0$, with probability $1-2e^{-cn}$ for some $c>0$,
$$\|(Hv_j)_K\|_1 \leq \frac{(1-\delta)(1+\epsilon)}{2-\delta}S$$
$$(1-\epsilon)S \leq \|Hv_j\|_1\leq (1+\epsilon)S$$
hold simultaneously for every vector $v_j$ in $V$.

For any $z$ such that $\|z\|_2=1$, there exists a point $v_0$ (we change the subscript numbering for $V$ to index the order) in $V$ such
that $\|z-v_0\|_2\triangleq \gamma_1 \leq \gamma$. Let $z_1$ denote $z-v_0$,
 then $\|z_1-\gamma_1v_1\|_2 \triangleq \gamma_2 \leq \gamma_1 \gamma \leq \gamma^2$ for
 some $v_1$ in $V$. Repeating this process, we have $z=\sum_{j\geq 0} \gamma_j v_j$, where $\gamma_0=1$, $\gamma_j \leq \gamma^j$ and $v_j \in V$.

Thus for any $z \in R^m$, $z=\|z\|_2\sum_{j\geq 0} \gamma_j v_j$. For any index set $K$ with $|K| \leq \beta n$,
\begin{eqnarray*}
\sum \limits_{i \in K} |(Hz)_i| &=& \|z\|_2 \sum \limits_{i \in K} |(\sum \limits_{j \geq 0} \gamma_j Hv_j)_i| \\
%& \leq & \|z\|_2 \sum \limits_{i \in T} \sum \limits_{j \geq 0} \gamma_j |(Hv_j)_i| \\
& \leq & \|z\|_2 \sum \limits_{i \in K} \sum \limits_{j \geq 0} \gamma^{j} |(Hv_j)_i| \\
&= & \|z\|_2  \sum \limits_{j \geq 0} \gamma^{j} \sum \limits_{i \in K} |(Hv_j)_i| \\
%& \leq &\|z\|_2  \sum \limits_{j \geq 0} \gamma^j  (\frac{1}{2}-\delta)S\\
& \leq & S\|z\|_2 \frac{(1-\delta)(1+\epsilon)}{(2-\delta){(1-\gamma)}}
\end{eqnarray*}

\begin{eqnarray*}
\sum \limits_{i} |(Hz)_i| &=& \|z\|_2 \sum \limits_{i } |(\sum \limits_{j \geq 0} \gamma_{j} Hv_j)_i| \\
& \geq & \|z\|_2 \sum \limits_{i}(|(Hv_0)_i|- \sum \limits_{j \geq 1} \gamma_{j} |(Hv_j)_i|) \\
& \geq & \|z\|_2 (\sum \limits_{i} |(Hv_0)_i|- \sum \limits_{j \geq 1} \gamma^{j} \sum \limits_{i}|(Hv_j)_i|) \\
& \geq & \|z\|_2 ((1-\epsilon)S- \sum \limits_{j \geq 1} \gamma^{j} (1+\epsilon)S) \\
%& \geq & (1-\epsilon -\frac{\gamma(1+\epsilon)}{1-\gamma})S\|z\|_2\\
& \geq &  S\|z\|_2 (1-\epsilon- \frac{\gamma(1+\epsilon)}{1-\gamma}).
\end{eqnarray*}

So $\sum \limits_ {i \in \overline{K}} |(Hz)_i| - \sum \limits_{i \in K}
|(Hz)_i| \geq S \|z\|_2
( 1-\epsilon- \frac{\gamma(1+\epsilon)}{1-\gamma}-2\frac{(1-\delta)(1+\epsilon)}{(2-\delta){(1-\gamma)}})$.
%\begin{eqnarray*}\sum \limits_ {i \in T^c} |(Az)_i|^p - \sum
%\limits_{i \in T} |(Az)_i|^p &=& \sum \limits_{i} |(Az)_i|^p-2\sum
%\limits_{i \in T} |(Az)_i|^p \\ &\geq& S \|z\|_2
%\frac{2\delta-2\gamma-\epsilon}{1-\gamma}.
%\end{eqnarray*}
For a given $\delta$, we can pick $\gamma$ and $\epsilon$ small
enough such that $\sum \limits_ {i \in \overline{K}} |(Hz)_i| - \sum
\limits_{i \in K} |(Hz)_i| \geq \delta' S \|z\|_2$, satisfying the condition in Theorem \ref{thm:balanced}.
\end{proof}

\begin{proof} (of Theorem \ref{thm:strongmain}). A direct consequence of Theorem \ref{thm:balanced} and Lemmas
\ref{lemma:blconcentration}, \ref{cor:s1}, and \ref{lemma:slp}.
\end{proof}

So far, we have considered the \emph{uniform} performance guarantee of $\ell_1$ minimization for all the sets with cardinality up to $k$.
Instead, given a support set $K$ for the outliers (though we do not know what the support set is before performing $\ell_1$ minimization), the correctable sparsity $\frac{|K|}{n}$ of outliers can go to $1$.  This result is formally stated in Theorem \ref{thm:weakthreshold_pureoutlier}.
\begin{theorem}
Take an arbitrary constant $0<\beta<1$ and let $\y=H\x+\e$, where $H$ is a
Gaussian Toeplitz matrix, and $\e$ is \emph{an}
outlier vector with no more than $k=\beta n$ non-zero elements. When
$n \rightarrow \infty$, $\x$ can be recovered perfectly using $\ell_1$
minimization from $\e$ with high probability.
\label{thm:weakthreshold_pureoutlier}
\end{theorem}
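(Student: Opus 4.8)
The plan is to abandon the uniform, all-supports condition of Theorem~\ref{thm:balanced} (which unavoidably pays a $\binom{n}{k}$ union bound and therefore caps $\beta$ below $1$) and instead exploit that here the support $K$ and the signs of the outliers are fixed \emph{before} $H$ is drawn. I would argue directly through the Karush--Kuhn--Tucker optimality conditions for (\ref{eq:errorcorrection}): writing the objective as $\|\mathbf{e}-Hv\|_1$ with $v=x-\x$, the value $v=0$ (i.e.\ $x=\x$) is the unique minimizer provided there is a dual certificate $g\in\R^n$ with $H^Tg=0$, $g_i=\operatorname{sgn}(\mathbf{e}_i)$ for $i\in K$, and $|g_i|<1$ for $i\in\overline{K}$, together with $H_{\overline{K}}$ having full column rank. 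Indeed, for any $v\neq 0$ one then gets $\|\mathbf{e}-Hv\|_1\geq\|\mathbf{e}\|_1+(1-\|g_{\overline{K}}\|_\infty)\sum_{i\in\overline{K}}|(Hv)_i|>\|\mathbf{e}\|_1$, where the last step uses $g^THv=0$ and the full rank of $H_{\overline{K}}$. So the theorem reduces to producing such a $g$ with high probability over $H$.

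Next I would construct $g$ explicitly: fix $g_K=\sigma:=\operatorname{sgn}(\mathbf{e}_K)$ and take $g_{\overline{K}}$ to be the minimum-$\ell_2$-norm solution of the underdetermined system $H_{\overline{K}}^Tg_{\overline{K}}=-H_K^T\sigma$, namely
\[
g_{\overline{K}}=H_{\overline{K}}\,(H_{\overline{K}}^TH_{\overline{K}})^{-1}b,\qquad b:=-H_K^T\sigma .
\]
Since $m$ is fixed and $|\overline{K}|=(1-\beta)n\to\infty$, this system is wildly underdetermined and solvable; the full rank of $H_{\overline{K}}$ will come for free from the Gram-matrix estimate below, so everything hinges on the strict bound $\|g_{\overline{K}}\|_\infty\to 0$.

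I would obtain that bound from three concentration estimates. (i) Since each row of $H$ has $m$ independent $N(0,1)$ entries, $\tfrac{1}{(1-\beta)n}H_{\overline{K}}^TH_{\overline{K}}\to I_m$, so $\lambda_{\min}(H_{\overline{K}}^TH_{\overline{K}})\geq \tfrac{1}{2}(1-\beta)n$ with high probability (giving full column rank too). (ii) For fixed $j$ the entries $\{H_{i,j}\}_{i\in K}$ are distinct, hence independent, so $b_j\sim N(0,k)$ and $\|b\|_2=O(\sqrt{k})=O(\sqrt{\beta n})$. Combining (i)--(ii), the vector $c:=(H_{\overline{K}}^TH_{\overline{K}})^{-1}b\in\R^m$ obeys $\|c\|_2\leq \|b\|_2/\lambda_{\min}=O\!\big(\tfrac{\sqrt{\beta}}{(1-\beta)\sqrt{n}}\big)=O(1/\sqrt{n})$. (iii) Writing $(g_{\overline{K}})_i=\langle H_{i,\cdot},c\rangle$ and applying Cauchy--Schwarz gives $|(g_{\overline{K}})_i|\leq \|H_{i,\cdot}\|_2\,\|c\|_2$; since $\|H_{i,\cdot}\|_2^2\sim\chi^2_m$ with $m$ fixed, a union bound yields $\max_{i}\|H_{i,\cdot}\|_2=O(\sqrt{\log n})$. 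Hence $\|g_{\overline{K}}\|_\infty=O(\sqrt{\log n}/\sqrt{n})\to 0<1$, and intersecting these few high-probability events finishes the construction. Crucially this survives for every fixed $\beta<1$, as $1-\beta$ enters only as a positive constant --- which is exactly why the weak threshold reaches $1$.

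The main obstacle will be step~(i): unlike the i.i.d.\ coding matrices of \cite{CandesErrorCorrection,CT1}, the Toeplitz structure correlates entries \emph{across} rows, so $(H_{\overline{K}}^TH_{\overline{K}})_{jl}=\sum_{i\in\overline{K}}h_{i+j-m}\,h_{i+l-m}$ is, for $j\neq l$, a sum of a short-range ($m$-dependent) stationary sequence of mean-zero products rather than of independent terms. I would control it by a blocking or $m$-dependence argument (or a Hanson--Wright-type concentration bound for the associated quadratic forms), showing the off-diagonal sums are $O(\sqrt{n})$ while the diagonal sums concentrate at $(1-\beta)n$. The residual dependence between $b$ and $H_{\overline{K}}$ near the boundary of $K$ touches only $O(m)$ terms and is harmless, because $\|c\|_2$ was bounded through operator-norm and Cauchy--Schwarz inequalities that never required $b$ to be independent of $H_{\overline{K}}$.
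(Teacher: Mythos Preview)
Your proposal is correct and takes a genuinely different route from the paper. The paper gives no direct argument here: it simply specializes Theorem~\ref{thm:noiseerrortogether} (the noisy case) to $\w=0$, observing that then at least $(1-\beta)n$ entries of $\e+\w$ vanish, so its hypothesis $\alpha(t)>0$ holds trivially for every $t>0$. That route is primal --- it shows via Gaussian Lipschitz concentration and a $\gamma$-net that $\|H(\x-\hat{\x})+\e\|_1>\|\e\|_1$ on every sphere $\|\x-\hat{\x}\|_2=t$. Your approach is dual: you build an explicit KKT certificate $g$ by least squares and bound $\|g_{\overline{K}}\|_\infty$. Advantages of your route: it is self-contained (no forward reference), constructive, yields exact recovery directly (Theorem~\ref{thm:noiseerrortogether} as stated only concludes $\|\hat{\x}-\x\|_2\to 0$), and makes the role of the constant $1-\beta$ completely transparent. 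Costs: you must carry out the $m$-dependent concentration for $H_{\overline{K}}^TH_{\overline{K}}$ that you correctly flag (routine since $m$ is fixed, via blocking or Hanson--Wright), and your step~(iii) --- a union bound of $\chi^2_m$ tails over $n$ rows --- delivers only polynomially high success probability, whereas the paper's primal concentration gives $1-e^{-cn}$. Both approaches are valid; the paper's is simply more economical given the machinery it has already built.
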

\begin{proof}
The statement follows from Theorem \ref{thm:noiseerrortogether}, specialized to the case $\w=0$.
\end{proof}

%
%\section{System Identification from Gross Errors and Noisy Measurements}
%\label{sec:noisy}
\section{With Both Outliers and Observation Noises}
\label{sec:noiseerrortogether}
We further consider Toeplitz matrix based system identification when  both
outliers and random observation errors are present, namely, the observation $\y=H\x+\e+\w$,
where $\e$ is a sparse outlier vector with no more than $k$ non-zero elements,
and $\w$ is the vector of additive observation noises. We show that, under mild conditions, the identification
error $\|\hat{\x}-\x\|_2$ goes to $0$ even when there are both outliers and
random observation errors, where $\hat{\x}$ is the solution to (\ref{eq:errorcorrection}).
\begin{theorem}
\label{thm:noiseerrortogether}
Let $m$ be a fixed positive integer and $H$ be an $n \times m$ Toeplitz matrix ($m<n$)
in (\ref{Top}) with each element $h_i$, $-m+2\leq i \leq n$, being i.i.d. $N(0,1)$ Gaussian random variables. Suppose $$\y=H\x+\e+\w,$$ where $\e$ is a sparse
vector with $k \leq \beta n$ non-zero elements ($\beta<1$ is a constant) and $\w$ is
the observation noise vector. For any constant $t>0$, we assume that, with high probability as $n \rightarrow \infty$,  at least $\alpha(t)n$ (where $\alpha(t)>0$ is a constant depending on $t$ ) elements in $\w+\e$ are no bigger than $t$ in amplitude.

Then $\|\hat{\x}-\x\|_2\rightarrow 0$ with high probability as $n \rightarrow \infty$, where $\hat{\x}$ is the solution to (\ref{eq:errorcorrection}).
\end{theorem}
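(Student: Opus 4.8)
The plan is to play the objective value of the $\ell_1$ minimizer $\hat{\x}$ against that of the true $\x$ and squeeze the resulting slack into a bound on $\|\hat{\x}-\x\|_2$. First I would set $z=\hat{\x}-\x\in R^m$ and $\bfv=\e+\w$, so that optimality of $\hat{\x}$ in (\ref{eq:errorcorrection}) reads $\|\bfv-Hz\|_1\le\|\bfv\|_1$. Fix a constant $t>0$ and let $G=\{i:|\bfv_i|\le t\}$ collect the coordinates on which the \emph{total} error is small; by hypothesis $|G|\ge\alpha(t)n$ with high probability. Bounding $|\bfv_i-(Hz)_i|\ge|(Hz)_i|-|\bfv_i|$ on $G$ and $|\bfv_i-(Hz)_i|\ge|\bfv_i|-|(Hz)_i|$ on $\overline{G}$, the optimality inequality collapses, after cancelling the common $\overline{G}$ term, to the core estimate
\begin{equation*}
\sum_{i\in G}|(Hz)_i|-\sum_{i\in\overline{G}}|(Hz)_i|\le 2\sum_{i\in G}|\bfv_i|\le 2t|G|\le 2tn.
\end{equation*}
In words, the $\ell_1$ energy of $Hz$ on the reliable coordinates can exceed its energy on the corrupted coordinates by at most $O(tn)$.

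Next I would turn the left-hand side into a multiple of $\|z\|_2$. Lemma \ref{cor:s1}, promoted to every $z$ by the $\gamma$-net/telescoping argument of Lemma \ref{lemma:slp}, already gives $\|Hz\|_1\approx S\|z\|_2$ with $S=nE\{|X|\}$. What I additionally need is a \emph{balance inequality} for the data-determined partition $(G,\overline{G})$: a constant $\eta(t)>0$ such that, with high probability and simultaneously for all $z$,
\begin{equation*}
\sum_{i\in G}|(Hz)_i|-\sum_{i\in\overline{G}}|(Hz)_i|\ge \eta(t)\,S\|z\|_2.
\end{equation*}
Granting this, the core estimate forces $\eta(t)S\|z\|_2\le 2tn$, i.e. $\|\hat{\x}-\x\|_2\le 2t/(\eta(t)E\{|X|\})$. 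Because the non-vanishing-density hypothesis keeps $\alpha(t)>0$ for every $t$, I may then let $t\downarrow0$ and conclude $\|\hat{\x}-\x\|_2\to0$. (Setting $\w=0$ makes $G$ contain all non-outlier coordinates and lets $t=0$ outright, which is how Theorem \ref{thm:weakthreshold_pureoutlier} would drop out.)

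The hard part will be the balance inequality, and this is exactly where the large corrupted fraction bites. The set $\overline{G}$ can have size up to $(1-\alpha(t))n$, a constant fraction of the coordinates, so the strong-threshold balance of Lemma \ref{lemma:slp} is unavailable with $K=\overline{G}$: its admissible $\beta$ is forced to be small precisely because its union bound pays the entropy factor $\log\binom{n}{k}$. The remedy I would pursue is a weak-threshold refinement --- rerun the Chernoff estimate of Lemma \ref{lemma:blconcentration} for the single partition $(G,\overline{G})$ fixed by the realization of $\e+\w$, union-bounding only over the $\gamma$-net of the unit sphere in $R^m$ (cardinality $(1+2/\gamma)^m$, a constant for fixed $m$) instead of over all $\binom{n}{k}$ supports, and then transfer from the net to all $z$ exactly as in Lemma \ref{lemma:slp}. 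The genuinely delicate point is quantitative: I must verify that the gap $\eta(t)$ obtained this way does not shrink faster than $t$, so that $t/\eta(t)\to0$; this is where the precise behaviour of the input density near $0$ and the rate of $\alpha(t)$ as $t\downarrow0$ enter, and it is the estimate I expect to demand the most care.
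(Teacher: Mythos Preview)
Your reduction to the core estimate
\[
\sum_{i\in G}|(Hz)_i|-\sum_{i\in\overline{G}}|(Hz)_i|\le 2tn
\]
is fine, but the balance inequality you then ask for cannot hold in the generality of the theorem. Normalize $\|z\|_2=1$; each $(Hz)_i$ is $N(0,1)$, so for a \emph{fixed} partition $(G,\overline G)$ the expectation of your left-hand side is exactly $(2|G|/n-1)\,S$. The hypothesis only promises $|G|\ge\alpha(t)n$ with $\alpha(t)>0$ arbitrary; in particular, when the outlier fraction $\beta>1/2$ (which the theorem expressly allows) one has $\alpha(t)\le 1-\beta<1/2$ for all small $t$, the expectation is negative, and concentration pins the quantity near that negative value. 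So no $\eta(t)>0$ exists, and your ``weak-threshold'' refinement --- which removes the $\binom{n}{k}$ union bound but does nothing to the sign of the mean --- cannot rescue it. The difficulty you flag at the end (``$\eta(t)$ must not shrink faster than $t$'') is therefore understated: $\eta(t)$ is not even positive in the regime that matters.

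The loss happens precisely at the triangle inequality $|\bfv_i-(Hz)_i|\ge|\bfv_i|-|(Hz)_i|$ on $\overline G$, which replaces a quantity that is \emph{nonnegative in expectation} by a strictly negative one. The paper avoids this by never splitting: it works directly with
\[
g(h,t)=\sum_{i=1}^{n}\bigl(|l_i+t(Hv_j)_i|-|l_i|\bigr),\qquad l_i=(\e+\w)_i,
\]
for $v_j$ on the $\gamma$-net, and uses the elementary fact (Jensen, or an explicit Gaussian computation) that $E\{|l+tX|-|l|\}\ge 0$ for \emph{every} real $l$ when $X\sim N(0,1)$, with a uniform lower bound $\ge c\,t$ whenever $|l|\le t$. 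Thus $E\{g(h,t)\}\ge c\,t\,\alpha(t)n$ regardless of how large $\overline G$ is, Gaussian concentration (Lipschitz constant $O(t\sqrt{mn})$) gives $g(h,t)>0$ with probability $1-e^{-\Theta(n)}$, and convexity of $g$ in $t$ pushes this to all $\|z\|_2\ge t$. This also disposes of your $t/\eta(t)$ worry: one gets $\|\hat\x-\x\|_2<t$ outright for each fixed $t$, not a bound of the form $Ct/\eta(t)$.
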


We remark that, in Theorem \ref{thm:noiseerrortogether}, the condition  on
the unknown outlier vector is merely $\beta <1$, and the condition on the random noise $\w$ is weaker than the usual condition of having i.i.d. elements with median $0$ \cite{SIC}. In fact, if $\w$ is independent from $\e$, and the elements of $\w$ are i.i.d. random variables following a distribution which is not ``vanishing'' in an arbitrarily small region around $0$ (namely the cumulative distribution function $F(t)$ satisfies that $F(t)-F(-t)>0$ for any $t>0$.
Note that the probability density function $f(t)$ is allowed to be $0$,however), the conditions in Theorem \ref{thm:noiseerrortogether} will be satisfied.

To see that, first observe that $(1-\beta)n$ elements of the outlier
vector are zero. If elements of $\w$ are i.i.d.
following a probability density function $f(s)$ that is not ``vanishing''
around $s=0$ , with probability converging to one
as $n \rightarrow \infty$, at least $[F(t)-F(-t)](1-\beta)(1-\epsilon)n=\alpha(t)n$ elements of
the vector $e+\w$ are no bigger than $t$, where $\epsilon>0$ is an arbitrarily small number.
Gaussian distributions, exponential
distributions, Gamma distributions and many other distributions for $\w$ all
satisfy such conditions in
Theorem \ref{thm:noiseerrortogether}. This greatly broadens the
existing results, e.g., in \cite{SIC}, which requires $f(0)>0$ and does not
accommodate outliers. Compared with analysis in compressed sensing
\cite{CandesErrorCorrection,DonohoNoise}, this result is for Toeplitz matrix in
system identification and applies to observation noises with non-Gaussian
distributions. The results in this paper also improve on the performance bounds in \cite{CDC2011}, by showing that the identification error actually goes to $0$.

\begin{proof} (of Theorem \ref{thm:noiseerrortogether})
$\|\y-H\hat{\x}\|_1$ can be written as $\|H(\x-\hat{\x})+\e+\w\|_1$. We argue that for any constant $t>0$, with high probability as $n \rightarrow \infty$, for all $\hat{\x}$ such that $\|\x-\hat{\x}\|=t$, $\|H(\x-\hat{\x})+\e+\w\|_1 > \|\e+\w\|_1$, contradicting the assumption that $\hat{\x}$ is the solution to (\ref{eq:errorcorrection}).

To see this, we cover the sphere $Z=\{z| \|z\|_2=1\}$ with a $\gamma$-net $V$. We first argue that for every discrete point $tv_j$ with $v_j$ from the $\gamma$-net, $\|H tv_j+\e+\w\|_1 > \|\e+\w\|_1$; and then extend the result to the set $tZ$.

Let us denote
\begin{eqnarray*}
g(h,t)&=&\|H tv_j+\e+\w\|_1 - \|\e+\w\|_1\\
&=&\sum_{i=1}^{n}(|l_i+t (Hv_j)_i|-|l_i|),\end{eqnarray*}
where $l_i=(\e+\w)_i$ for $1\leq i \leq n$. We note that $(Hv_j)_i$ is a Gaussian random variable $N(0,1)$. Let $X$ be a Gaussian random variable $N(0,\sigma^2)$, then for an arbitrary number $l$,
\begin{eqnarray*}
&&E\left\{ {|l+tX|}-|l| \right\}\\
%&=& \frac{1}{\sqrt{2\pi}t\sigma}\int_{-\infty}^{\infty} |l+x|e^{-\frac{x^2}{2t^2\sigma^2}}\,dx-|l|  \\
%&=& \frac{1}{\sqrt{2\pi}t\sigma}\int_{-l}^{\infty} (|l|+x)e^{-\frac{x^2}{2t^2\sigma^2}}\,dx\\
%&+&\frac{1}{\sqrt{2\pi}t\sigma}\int_{-\infty}^{-l} -(|l|+x)e^{-\frac{x^2}{2t^2\sigma^2}}\,dx-|l|\\
&=&\frac{2}{\sqrt{2\pi}t\sigma}\int_{0}^{\infty} x e^{-\frac{(|l|+x)^2}{2t^2\sigma^2}}\,dx\\
&=& \sqrt{\frac{2}{\pi}}t\sigma e^{-\frac{l^2}{2t^2\sigma^2}}-2|l|(1-\Phi(\frac{|l|}{t\sigma})),
\end{eqnarray*}
which is a decreasing nonnegative function in $|l|$. From this, $E\{g(h,t)\}=\sum_{i=1}^{n}(\sqrt{\frac{2}{\pi}}t e^{-\frac{|l_i|^2}{2t^2\sigma^2}}-2|l_i|(1-\Phi(\frac{|l_i|}{t})))$.
When $|l|\leq t$ and $\sigma=1$,
\begin{eqnarray*}
&~&E\left\{ {|l+tX|}-|l| \right\}\\
&=&\sqrt{\frac{2}{\pi}}t e^{-\frac{1}{2}}-2|l|(1-\Phi(1))\\
&\geq& 0.1666t.
\end{eqnarray*}
It is not hard to verify that $|g(a,t)-g(b,t)|\leq \sum_{i=1}^{n}t \sqrt{m}|a_i-b_i| \leq t\sqrt{mn}\|a_i-b_i\|_2$, and $g(h,t)$ has a Lipschitz constant (for $h$) no bigger than $t\sqrt{mn}$. Then by the concentration of measure phenomenon for Gaussian random variables (see \cite{Ledoux01,LedouxTalagrand1991}),
\begin{eqnarray*}
&&P(g(h,t)\leq 0)\\
&=& P(\frac{g(h,t)-E\{g(h,t)\}}{t\sqrt{mn}} \leq -\frac{E\{g(h,t)\}}{t\sqrt{mn}} )\\
%&\leq& e^{-\frac{(E\{g(X)\})^2}{2t^2 n}}\\
&\leq& 2e^{-\frac{ \left(\sum_{i=1}^{n}\left[\sqrt{\frac{2}{\pi}}t e^{-\frac{l_i^2}{2t^2}}-2|l_i|(1-\Phi(\frac{|l_i|}{t}))\right]\right)^2  }{2t^2 nm}}\\
&\triangleq& 2e^{-B}.
\end{eqnarray*}
If there exists a constant $\alpha (t)$ such that, as $n \rightarrow \infty$, at least $\alpha(t) n$ elements have magnitudes smaller than $t$, then the numerator in $B$ behaves as $\Theta(n^2)$ and the corresponding probability $P(g(h,t)\leq 0)$ behaves as $2e^{-\Theta( n)}$. This is because when $|l|\leq t$, $\sqrt{\frac{2}{\pi}}t e^{-\frac{|l|^2}{2t^2}}-2|l|(1-\Phi(\frac{|l|}{t})) \geq 0.1666 t$.
 %So if $\frac{ \left(\sum_{i=1}^{n}\left[\sqrt{\frac{2}{\pi}}t e^{-\frac{l_i^2}{2t^2}}-2|l_i|(1-\Phi(\frac{|l_i|}{t}))\right]\right)^2  }{2t^2 nm} \rightarrow \infty$  as $n \rightarrow \infty$, then with high probability, the decoding error will be less than $t$.

By the same reasoning, $g(h,t)\leq \epsilon n$ holds with probability no more than $e^{-c_5 n}$ for each discrete point from the $\gamma$-net $tV$, where $\epsilon>0$ is a sufficiently small constant and $c_5>0$ is a constant which may only depend on $\epsilon$.  Since there are at most $(1+\frac{2}{\gamma})^{m}$ points from the $\gamma$-net, by a simple union bound, with probability $1-e^{-c_6 n}$ as $n \rightarrow \infty$, $g(h,t)> \epsilon n$ holds for \emph{all} points from the $\gamma$-net $tV$, where $c_6>0$ is a constant and $\gamma$ can be taken as an arbitrarily small \emph{constant}. Following similar $\gamma$-net proof techniques for Lemmas \ref{cor:s1}, \ref{lemma:blconcentration} and \ref{lemma:slp}, if we choose a sufficiently small constant $\epsilon>0$ and accordingly a sufficiently small constant $\gamma>0$, $g(h,t)> 0.5 \epsilon n$ holds simultaneously for every point in the set $tZ$ with high probability $1-e^{-c_7 n}$, where $c_7>0$ is a constant.

Notice if $g(h, t)>0$ for $t=t_1$, then necessarily $g(h,t)>0$ for $t=t_2>t_1$. This is because $g(h,t)$ is a convex function in $t\geq 0$ and $g(h,0)=0$. So if $g(h,t)> 0.5 \epsilon n>0$ holds with high probability for every point $tZ$, necessarily $\|\hat{\x}-\x\|_2<t$, because $\hat{\x}$ minimizes the objective in (\ref{eq:errorcorrection}). Because we can pick arbitrarily small $t$, $\|\hat{\x}-\x\|_2\rightarrow 0$ with high probability as $n \rightarrow \infty$.
\end{proof}

%We remark that the conditions in Theorem \ref{thm:noiseerrortogether} on
%the unknown outlier vector is the same, i.e., $\beta <1$ and the condition on
%the random noise $\w$ is weaker than usually i.i.d. of zero mean.
%To see that, first observe that $(1-\beta)n$ elements of the outlier
%vector are zero.
%If elements of $\w$ are i.i.d.
%following a probability density function $f(s)$ that is not ``vanishing''
%around $s=0$ (namely the cumulative distribution function $F(t)>0$ for any $t>0$.
%$f(0)$ can be $0$ sometimes), with probability converging to one
%as $n \rightarrow \infty$, $F(t)(1-\beta)n=\alpha(t)n$ elements of
%the vector $e+\w$ are no bigger than $t$.
%Gaussian distribution, exponential
%distributions, Gamma distributions and many other distributions for $\w$ all
%satisfy such conditions in
%Theorem \ref{thm:noiseerrortogether}. This greatly broadens the
%existing results, e.g., in \cite{SIC}, which requires $f(0)>0$ and does not
%accommodate outliers. Compared with analysis in compressed sensing
%\cite{CandesErrorCorrection,DonohoNoise}, this result is for Toeplitz matrix in
%system identification and applies to observation noises with non-Gaussian
%distributions.
 %A refined analysis of ours can actually show $\|\hat{\x}-\x\|_2$ is often in the order $O(\frac{1}{\sqrt{n}})$.

 \section{Extensions to Non-Gaussian Inputs}
\label{sec:nonGaussian}

 In previous sections, we have considered Gaussian inputs for system
identifications. In fact, our results also extend to non-Gaussian inputs.
We illustrate this by considering a widely applied input for
system identification, the pseudo-random binary sequence (PRBS) which is a
 random
binary sequence taking values $\pm 1$ with the equal probability.
To simply our analysis, we only consider strong
performance bounds of uniformly correcting every possible set of outliers
with cardinality smaller than $k$. By noting that the PRBS input is actually
an i.i.d. Bernoulli sequence, we first state the main result.

 \begin{theorem}
\label{thm:strongmain1}
Let $H$ be an $n \times m$ Toeplitz matrix as in (\ref{Top}), where $m$ is a
fixed positive integer and $h_i$, $-m+2 \leq i \leq n$ are i.i.d. Bernoulli
random variables taking values $+1$ and $-1$ with equal probability.
Suppose that $\y=H\x+\e$, where $\e$ is a sparse
outlier vector with no more than $k$ non-zero elements.
Then there exists a constant $c_1 >0$ and a constant $\beta>0$ such that, with probability
$1-e^{-c_1n}$ as $n \rightarrow \infty$, the $n \times m$ Toeplitz matrix
$H$  has the following property: for every $\x \in R^m$ and every error $\e$ with
its support $K$ satisfying $|K|=k \leq \beta n$, $\x$ is the unique solution
to (\ref{eq:errorcorrection}).
\end{theorem}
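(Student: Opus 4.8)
The plan is to reuse the architecture of the proof of Theorem \ref{thm:strongmain} essentially verbatim, isolating the only two places where Gaussianity was actually used: the Gaussian concentration inequality behind Lemma \ref{cor:s1}, and the rotational invariance that made $E|(Hz)_i|$ independent of $z$. As before, by Theorem \ref{thm:balanced} it suffices to establish, with probability $1-e^{-c_1 n}$, that $\|(Hz)_K\|_1 < \|(Hz)_{\overline{K}}\|_1$ for every $z$ with $\|z\|_2=1$ and every $K$ with $|K|\le \beta n$. I would split this into the three ingredients corresponding to Lemmas \ref{cor:s1}, \ref{lemma:blconcentration}, and \ref{lemma:slp}.

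For the first ingredient (the analog of Lemma \ref{cor:s1}), fix $z$ with $\|z\|_2=1$. The function $f(h)=\|Hz\|_1$ is \emph{convex} in $h$ and is still Lipschitz with constant $\sqrt{m(n+m-1)}$, since the Cauchy--Schwarz estimate in the proof of Lemma \ref{cor:s1} never used the law of $h$. Because the entries $h_j$ are now bounded ($|h_j|=1$) and independent, I would replace the Gaussian inequality by Talagrand's concentration inequality for convex Lipschitz functions on a product space \cite{Ledoux01} (equivalently, by McDiarmid's bounded-differences inequality, whose coordinate increments $c_j\le 2\sqrt{m}$ give the same exponent), obtaining the sub-Gaussian tail $P(|f-E f|\ge t)\le 2e^{-t^2/(2L^2)}$ with $L^2=m(n+m-1)$. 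The one genuinely new point is the centering: here $E\|Hz\|_1=n\,\mu(z)$ with $\mu(z)=E|\sum_j \epsilon_j z_j|$ a Rademacher average depending on $z$. By Khintchine's inequality $\tfrac{1}{\sqrt{2}}\|z\|_2\le \mu(z)\le \|z\|_2$, so $\mu(z)\in[\tfrac{1}{\sqrt2},1]$ uniformly on the sphere; in particular $Ef=\Theta(n)$, and $(1-\epsilon)n\mu(z)\le \|Hz\|_1\le (1+\epsilon)n\mu(z)$ holds with probability $1-2e^{-c_2 n}$, exactly as in Lemma \ref{cor:s1} but with the constant $S$ replaced by the $z$-dependent value $n\mu(z)$ pinned between $n/\sqrt2$ and $n$.

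For the second ingredient (the analog of Lemma \ref{lemma:blconcentration}), the combinatorial core is unchanged: for fixed $z$ and fixed $K$ with $|K|=k$, the rows indexed by $K$ involve at most $mk$ of the $h_j$, and among the remaining rows one extracts $I\subseteq\overline{K}$ with $|I|\ge \frac{n-k(2m-1)}{2m-1}$ on which the entries $(Hz)_i$ are independent copies of $R=\sum_j \epsilon_j z_j$ and independent of the $K$-rows. The simplification from the Bernoulli structure is that the upper bound on the $K$-part is now \emph{deterministic}: since $|H_{ij}|=1$, $\|(Hz)_K\|_1\le k\|z\|_1\le k\sqrt{m}$. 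The lower bound $\|Hz\|_1\ge \sum_{i\in I}|(Hz)_i|$ is a sum of $|I|$ i.i.d.\ bounded variables of mean $\mu(z)\ge 1/\sqrt2$, so Hoeffding's inequality gives $\sum_{i\in I}|(Hz)_i|\ge \tfrac{|I|}{\sqrt2}(1-\epsilon)$ except with probability $e^{-\Theta(n)}$. Thus the event $\|(Hz)_K\|_1 \le \frac{1-\delta}{2-\delta}\|Hz\|_1$ fails for a fixed $K$ only with probability $e^{-\Theta(n)}$, provided $\beta$ is small enough that $k\sqrt{m}\le \frac{1-\delta}{2-\delta}\cdot\tfrac{|I|}{\sqrt2}(1-\epsilon)$ in the first place; a union bound over the $\binom{n}{k}=e^{nH(\beta)+o(n)}$ supports then closes the argument once $\beta$ is taken small enough for the $\Theta(n)$ decay to beat the entropy $H(\beta)$, yielding a Bernoulli-specific threshold $\beta$.

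The third ingredient and the conclusion (the analog of Lemma \ref{lemma:slp}) transfer verbatim: cover $\{z:\|z\|_2=1\}$ by a $\gamma$-net of size $(1+2/\gamma)^m$, union-bound the two preceding ingredients over the net, and extend to arbitrary $z$ through the geometric decomposition $z=\sum_{j\ge 0}\gamma_j v_j$ exactly as in Lemma \ref{lemma:slp}, the only change being to replace the constant $S$ by its uniform bounds, using $\|Hv_j\|_1\le (1+\epsilon)n$ from $\mu\le 1$ in the upper estimates and $\|Hv_0\|_1\ge (1-\epsilon)\tfrac{n}{\sqrt2}$ from $\mu\ge 1/\sqrt2$ in the lower estimate. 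Choosing $\gamma,\epsilon$ small then preserves $\sum_{i\in\overline{K}}|(Hz)_i|-\sum_{i\in K}|(Hz)_i|>0$ for all such $z$ and $K$, and Theorem \ref{thm:balanced} gives Theorem \ref{thm:strongmain1}. I expect the main obstacle to be precisely the first ingredient: there is no Bernoulli counterpart of the Gaussian concentration inequality, so one must exploit the \emph{convexity} of $\|Hz\|_1$ and invoke Talagrand's convex concentration (or bounded differences), and then use Khintchine's inequality to keep the now $z$-dependent mean $\mu(z)$ bounded away from $0$; once these two substitutions are in place, the row-counting, the Chernoff/union bound over supports, and the net extension all run unchanged.
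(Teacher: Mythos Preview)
Your proposal is correct and follows the same three-step architecture as the paper (concentration of $\|Hz\|_1$, control of $\|(Hz)_K\|_1$, $\gamma$-net extension via Theorem~\ref{thm:balanced}), but the lemma-level tools differ in two places worth noting. For the centering of the full norm, you invoke Khintchine's inequality to get $\mu(z)\ge 1/\sqrt{2}$, which is sharper than the paper's elementary bound $\mu(z)\ge 1/(2\sqrt{m})$ (obtained by observing that some $|z_j|\ge 1/\sqrt{m}$ and conditioning on the sign of $h_j$). For the $K$-restricted norm, the approaches genuinely diverge: you port over the Gaussian-case row-extraction combinatorics (find $\ge \frac{n-k(2m-1)}{2m-1}$ independent rows in $\overline{K}$, bound $\|(Hz)_K\|_1\le k\sqrt{m}$ deterministically, then Hoeffding on the independent block), whereas the paper avoids this extraction entirely by applying McDiarmid a second time, directly to $f_K(h)=\|(Hz)_K\|_1$, with the same coordinate increments $c_j\le 2\sqrt{m}$, concentrating it around its mean $\tfrac{|K|}{n}S\le \beta n\sqrt{m}$. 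The paper's route is shorter and does not lose the $2m-1$ factor from row extraction; your Khintchine constant would in principle buy a better threshold $\beta$. Either combination yields Theorem~\ref{thm:strongmain1}.
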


In order to prove the results, again we only need to consider $Hz$ for
$z \in R^{m}$ with $\|z\|_2=1$. As in Section \ref{sec:strongthreshold},
we adopt the strategy of discretizing the subspace generated by $H$.
We pick a finite set $V=\{v_1, ..., v_{N}\}$ called $\gamma$-net
on $\{z|\|z\|_2=1\}$ for a constant $\gamma>0$: in a $\gamma$-net,
for every point $z$ from $\{z|\|z\|_2=1\}$, there is a $v_l \in V$
such that $\|z-v_l\|_2 \leq \gamma$. We subsequently establish the property
in Theorem \ref{thm:balanced} for the points in the $\gamma$-net $V$, before
extending the results to every point $Hz$ with $\|z\|_2=1$.

We first need a few lemmas and the concentration of measure phenomenon for $Hz$,
where $z \in R^m$ is a single
vector with $\|z\|_2=1$. To this end, we need to use the McDiarmid's inequality.
\begin{theorem} {(McDiarmid's inequality \cite{McDiarmid})}
Let ${X_1,\ldots,X_n}$ be independent random variables taking values in a set $\chi$, and let ${F: \chi \times \ldots \times \chi \rightarrow {\bf R}}$ be a function with the property that if one freezes all but the ${i^{th}}$ coordinate of ${F(x_1,\ldots,x_n)}$ for some ${1 \leq i \leq n}$, then ${F}$ only fluctuates by most ${c_i > 0}$, thus,
\begin{eqnarray*}
&& |F(x_1,\ldots,x_{i-1},x_i,x_{i+1},\ldots,x_n) -\\
&&~~~\displaystyle F(x_1,\ldots,x_{i-1},x_i',x_{i+1},\ldots,x_n)| \leq c_i
\end{eqnarray*}
    for all ${x_j \in \chi}$ and ${x'_i \in \chi}$, for ${1 \leq i, j \leq n}$. Then for any ${\lambda > 0}$, one has
$$ {\bf P}( |F(X) - {\bf E}[ F(X)]| \geq \lambda ) \leq 2 e^{-\frac{2\lambda^2}{\sigma^2}  },$$
where ${\sigma^2 := \sum_{i=1}^n c_i^2}$.
\end{theorem}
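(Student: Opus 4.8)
The plan is to prove McDiarmid's inequality by the martingale method (the method of bounded differences), constructing a Doob martingale and then running the Azuma--Hoeffding argument coordinate by coordinate. First I would set up the natural filtration generated by the inputs: writing $X=(X_1,\ldots,X_n)$, define the Doob martingale $Y_i=\E[F(X)\mid X_1,\ldots,X_i]$ for $0\le i\le n$, so that $Y_0=\E[F(X)]$ is the target mean and $Y_n=F(X)$ is the quantity of interest. The martingale differences $D_i=Y_i-Y_{i-1}$ then satisfy $\E[D_i\mid X_1,\ldots,X_{i-1}]=0$ and telescope to $F(X)-\E[F(X)]=\sum_{i=1}^{n}D_i$.

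The key step is to show that each difference $D_i$ lies, conditionally on the past, in an interval of length at most $c_i$. Fixing $X_1,\ldots,X_{i-1}$, I would define $U_i$ and $L_i$ as the supremum and infimum over the value $x$ of $X_i$ of the quantity $\E[F(X)\mid X_1,\ldots,X_{i-1},X_i=x]-Y_{i-1}$, where the remaining expectation averages over $X_{i+1},\ldots,X_n$. Using independence of the coordinates, the difference of two such conditional expectations corresponding to values $x$ and $x'$ of $X_i$ is an average, over the law of the later coordinates, of $F(\ldots,x,\ldots)-F(\ldots,x',\ldots)$, which the bounded-differences hypothesis controls by $c_i$; hence $U_i-L_i\le c_i$ and $D_i\in[L_i,U_i]$ almost surely.

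Next I would invoke Hoeffding's lemma conditionally: for a zero-mean random variable confined to an interval of length $c_i$, the conditional moment generating function obeys $\E[e^{sD_i}\mid X_1,\ldots,X_{i-1}]\le e^{s^2c_i^2/8}$ for every $s\in\R$. Iterating this bound through the tower property, conditioning on successively fewer coordinates and pulling out one factor at a time, yields $\E[e^{s(F(X)-\E[F(X)])}]\le e^{s^2\sigma^2/8}$ with $\sigma^2=\sum_{i=1}^{n}c_i^2$. Finally a Chernoff bound gives $P(F(X)-\E[F(X)]\ge\lambda)\le e^{-s\lambda+s^2\sigma^2/8}$, and optimizing at $s=4\lambda/\sigma^2$ produces the one-sided estimate $e^{-2\lambda^2/\sigma^2}$. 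Applying the same argument to $-F$ and combining by a union bound yields the stated two-sided inequality with the factor $2$.

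The main obstacle is the bounded-difference step for the martingale increments: one must verify carefully that the conditional range of $D_i$ is controlled by $c_i$ rather than, say, $2c_i$, and this is exactly where independence of the $X_j$ enters, since it lets the supremum and infimum over the $i$-th coordinate commute with the averaging over the later coordinates. The subsequent Hoeffding-lemma and Chernoff-optimization steps are standard and involve only routine calculus.
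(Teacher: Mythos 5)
Your proof is correct, but note that the paper itself offers no proof of this statement: it is McDiarmid's inequality, imported verbatim with a citation to McDiarmid's 1989 survey, so there is no internal argument to compare against. What you give is the classical method-of-bounded-differences proof from that very source---Doob martingale $Y_i=\E[F(X)\mid X_1,\ldots,X_i]$, the conditional range bound $U_i-L_i\le c_i$ (where you correctly isolate independence as the step letting the sup/inf over the $i$-th coordinate commute with averaging over the later ones), Hoeffding's lemma giving $\E[e^{sD_i}\mid X_1,\ldots,X_{i-1}]\le e^{s^2c_i^2/8}$, the tower-property iteration, Chernoff optimization at $s=4\lambda/\sigma^2$, and a union bound for the two-sided statement with the factor $2$. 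All steps are sound and the constants match the stated bound $2e^{-2\lambda^2/\sigma^2}$.
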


\begin{lemma}\label{cor:Bernoullis1}
Let $\|z\|_2=1$. For any $\epsilon >0$, with probability $1-2e^{-\frac{\epsilon^2 n^2}{2(n+m-1)m}}$, it
holds that
$$
S-\epsilon n \leq \|Hz\|_1\leq S+\epsilon n
$$
where $S=nE\{|X|\}$, and $X=\sum\limits_{i=1}^{m} z_i h_i$ with $h_i$'s being independent Bernoulli random variables.
\end{lemma}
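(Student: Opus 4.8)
The plan is to mirror the proof of Lemma \ref{cor:s1}, replacing the Gaussian concentration inequality by McDiarmid's inequality, which is the appropriate tool for independent bounded (Bernoulli) entries. First I would view $F(h)=\|Hz\|_1$ as a function of the $n+m-1$ independent coordinates $h=(h_{-m+2},\ldots,h_n)$, each taking values in $\chi=\{+1,-1\}$. Since each row of $H$ produces an entry $(Hz)_i$ with the same law as $X=\sum_{i=1}^{m} z_i h_i$, linearity of expectation gives $E\{F(h)\}=\sum_{i=1}^{n} E\{|(Hz)_i|\}=nE\{|X|\}=S$, so the claimed two-sided bound is exactly a concentration statement for $F$ around its mean $S$.

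Next I would determine the bounded-difference constants $c_j$ demanded by McDiarmid's inequality. Freezing all coordinates of $h$ except $h_j$ and flipping $h_j$ between $+1$ and $-1$ alters $F$ only through the rows of $H$ in which $h_j$ appears. By the Toeplitz structure in (\ref{Top}), each $h_j$ appears in at most $m$ rows, and in each such row it multiplies a distinct coordinate $z_l$ of $z$. Using the reverse triangle inequality $\big||u|-|v|\big|\leq|u-v|$ on each affected row, and noting that a $\pm 1$ flip of $h_j$ changes the corresponding summand $z_l h_j$ by $2|z_l|$ in absolute value, the total fluctuation is bounded by $c_j\leq 2\sum_{l=1}^{m}|z_l|=2\|z\|_1\leq 2\sqrt{m}\,\|z\|_2=2\sqrt{m}$, where the last steps use Cauchy--Schwarz and $\|z\|_2=1$.

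With $c_j\leq 2\sqrt{m}$ for each of the $n+m-1$ coordinates, one obtains $\sigma^2=\sum_j c_j^2\leq 4m(n+m-1)$. Applying McDiarmid's inequality with $\lambda=\epsilon n$ then yields
$$
P\big(|\,\|Hz\|_1 - S\,|\geq \epsilon n\big)\leq 2e^{-\frac{2(\epsilon n)^2}{4m(n+m-1)}}=2e^{-\frac{\epsilon^2 n^2}{2(n+m-1)m}},
$$
which is precisely the asserted bound.

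I expect the main obstacle to be the careful bookkeeping of the bounded-difference constant: one must exploit the Toeplitz structure to argue that a single input flip perturbs at most $m$ rows, each with a distinct weight drawn from $z_1,\ldots,z_m$, so that the per-coordinate fluctuation aggregates to $2\|z\|_1$ rather than to a quantity that grows with $n$. Getting this constant sharp is exactly what produces the exponent $\frac{\epsilon^2 n^2}{2(n+m-1)m}$ in the statement; a loose count of the affected rows or of their weights would degrade the exponent and weaken the concentration estimate.
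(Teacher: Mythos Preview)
Your proposal is correct and follows essentially the same approach as the paper's own proof: define $f(h_{-m+2},\ldots,h_n)=\|Hz\|_1$, observe that flipping a single Bernoulli coordinate changes $f$ by at most $2\|z\|_1\leq 2\sqrt{m}$, and apply McDiarmid's inequality. Your version simply makes explicit the bookkeeping (the Toeplitz structure ensuring each $h_j$ touches at most $m$ rows with distinct weights from $z$, the computation $E\{F(h)\}=S$, and the substitution $\sigma^2\leq 4m(n+m-1)$, $\lambda=\epsilon n$) that the paper leaves implicit.
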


\begin{proof}
Define $f(h_{-m+2},...,h_{n})=\|Hz\|_1$. We note that,  by changing the value of only one Bernoulli variable $h_i$,  $f(h_{-m+2},...,h_{n})$ changes at most by $2\|z\|_1\leq 2\sqrt{m}$. The conclusion of this lemma then follows from applying the McDiarmid's inequality.
\end{proof}

\begin{lemma}
Let $\|z\|_2=1$, $\beta \in (0,1)$ and $\epsilon$ be an arbitrarily small positive number. Let $K\subseteq \{1,2,...,n\}$ be an arbitrary subset with cardinality $\frac{|K|}{n}\leq \beta$. With a probability of at least $1-2e^{-\frac{\epsilon^2 n^2}{2(n+m-1)m}}$,
\begin{equation*}
\frac{|K|}{n}S-\epsilon n  \leq \|(Hz)_K\|_1 \leq \frac{|K|}{n}S+\epsilon n,
\end{equation*}
where $S=nE\{|X|\}$, and $X=\sum\limits_{i=1}^{m} z_i h_i$ with $h_i$'s being independent Bernoulli random variables.
\label{lemma:Bernoulliblconcentration}
\end{lemma}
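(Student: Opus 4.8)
The plan is to reprise the McDiarmid argument of Lemma~\ref{cor:Bernoullis1}, now applied to the sum restricted to $K$. Define $g(h_{-m+2},\ldots,h_n)=\|(Hz)_K\|_1=\sum_{i\in K}|(Hz)_i|$. Two ingredients are needed: identifying $E\{g\}$ with $\frac{|K|}{n}S$, and controlling the coordinatewise fluctuation of $g$.

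First I would compute the mean. With the convention $H_{i,j}=h_{-m+i+j}$, each row entry $(Hz)_i=\sum_{j=1}^m z_j h_{-m+i+j}$ is a linear combination, with fixed coefficients $z_1,\ldots,z_m$, of $m$ distinct i.i.d.\ Bernoulli variables; since the $h_\ell$ are i.i.d., the vector $(h_{-m+i+1},\ldots,h_i)$ has the same joint law as $(h_1,\ldots,h_m)$, so $(Hz)_i$ has exactly the distribution of $X=\sum_{i=1}^m z_i h_i$ for \emph{every} $i$. Hence $E\{|(Hz)_i|\}=E\{|X|\}$ and, by linearity, $E\{g\}=|K|\,E\{|X|\}=\frac{|K|}{n}S$, independent of which rows constitute $K$.

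Next I would verify the bounded-differences condition. A single variable $h_\ell$ occurs in at most $m$ rows of $H$ --- those with row index $i\in\{\ell,\ldots,\ell+m-1\}$ --- appearing in row $i$ with some coefficient $z_j$; flipping $h_\ell$ from $+1$ to $-1$ moves each such $|(Hz)_i|$ by at most $2|z_j|$. Because $g$ sums only over rows in $K$, the total perturbation is at most $\sum_{j=1}^m 2|z_j|=2\|z\|_1\le 2\sqrt{m}$ --- the same constant as in Lemma~\ref{cor:Bernoullis1}, since restricting to $K$ can only drop affected terms. Thus every fluctuation constant satisfies $c_\ell\le 2\sqrt m$, and over the $n+m-1$ variables $\sigma^2=\sum_\ell c_\ell^2\le 4m(n+m-1)$.

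Applying McDiarmid's inequality with $\lambda=\epsilon n$ then gives $P(|g-E\{g\}|\ge\epsilon n)\le 2e^{-2(\epsilon n)^2/(4m(n+m-1))}=2e^{-\epsilon^2 n^2/(2(n+m-1)m)}$, which together with the mean computation yields the two-sided estimate in the statement. I do not anticipate a genuine obstacle: the only point demanding care is recognizing that passing from the full $\ell_1$ norm to its restriction to $K$ neither alters the per-row marginal (so the mean simply scales by $|K|/n$) nor enlarges the bounded-differences constant (so the exponent matches Lemma~\ref{cor:Bernoullis1}). Note that $K$ is a single fixed set here, so no union bound over subsets is invoked at this stage.
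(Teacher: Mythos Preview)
Your proposal is correct and follows exactly the approach the paper intends: define $f(h_{-m+2},\ldots,h_n)=\|(Hz)_K\|_1$ and apply McDiarmid's inequality with the same bounded-differences constant $2\sqrt{m}$ established in Lemma~\ref{cor:Bernoullis1}. The paper's own proof is a single sentence deferring to that earlier argument, and you have correctly supplied the two details it suppresses---that the per-row marginal is unchanged so the mean scales by $|K|/n$, and that restricting the sum to $K$ cannot enlarge the fluctuation constant.
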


\begin{proof}
Define $f(h_{-m+2},...,h_{n})=\|(Hz)_{K}\|_1$. Then again,  the conclusion follows from applying the McDiarmid's inequality.
\end{proof}

We also have simple upper and lower bounds on $S$ in Lemma \ref{cor:Bernoullis1} and Lemma \ref{lemma:Bernoulliblconcentration}.
\begin{lemma}
 For any $z$ with $\|z\|_2=1$.
 \begin{equation*}
\frac{n}{2 \sqrt{m}}\leq S \leq n \sqrt{m}
\end{equation*}
\label{lemma:Bernoulliupperlower}
\end{lemma}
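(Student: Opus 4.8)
The plan is to reduce everything to two-sided bounds on the single scalar expectation $E\{|X|\}$, since $S=nE\{|X|\}$ and the claimed inequalities are, after dividing by $n$, exactly $\frac{1}{2\sqrt{m}}\leq E\{|X|\}\leq\sqrt{m}$. Here $X=\sum_{i=1}^{m} z_i h_i$ is a weighted Rademacher sum with $\|z\|_2=1$. Two elementary facts drive both bounds: first, by Cauchy--Schwarz, $\|z\|_1\leq\sqrt{m}\,\|z\|_2=\sqrt{m}$; second, since the $h_i$ are independent, mean-zero, and satisfy $h_i^2=1$, we have $E\{X^2\}=\sum_{i=1}^{m} z_i^2=\|z\|_2^2=1$.

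For the upper bound I would simply apply the triangle inequality inside the expectation: $|X|\leq\sum_{i=1}^{m} |z_i|\,|h_i|=\|z\|_1$, so $E\{|X|\}\leq\|z\|_1\leq\sqrt{m}$, giving $S\leq n\sqrt{m}$. This direction is routine and loses nothing essential.

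The lower bound is the only part that requires an idea, because the triangle inequality yields no positive lower bound (the sum $X$ can vanish). The genuine obstacle is anti-concentration: one must rule out $|X|$ being typically tiny. I would combine the two facts above. Pointwise $|X|\leq\|z\|_1\leq\sqrt{m}$, hence $X^2\leq\sqrt{m}\,|X|$ everywhere; taking expectations and using $E\{X^2\}=1$ yields $1\leq\sqrt{m}\,E\{|X|\}$, that is $E\{|X|\}\geq\frac{1}{\sqrt{m}}\geq\frac{1}{2\sqrt{m}}$, and therefore $S\geq\frac{n}{2\sqrt{m}}$. The stated factor $\frac{1}{2}$ is thus slack; this argument in fact delivers $\frac{1}{\sqrt{m}}$.

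As a more self-contained alternative for the lower bound, I would condition on all coordinates but the largest: choosing $i_0$ with $|z_{i_0}|\geq\|z\|_2/\sqrt{m}=1/\sqrt{m}$ and writing $X=z_{i_0}h_{i_0}+Y$ with $Y$ independent of $h_{i_0}$, the identity $\frac{1}{2}(|y+a|+|y-a|)=\max(|y|,a)$ gives $E\{|X|\mid Y\}\geq|z_{i_0}|$, whence $E\{|X|\}\geq 1/\sqrt{m}$ once more. Either route reaches the same conclusion, and I expect no technical difficulty beyond recognizing that the nontrivial direction is the lower bound and that it hinges on pairing the second-moment identity $E\{X^2\}=1$ with the deterministic ceiling $|X|\leq\sqrt{m}$.
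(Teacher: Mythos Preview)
Your proof is correct. The upper bound is identical to the paper's. For the lower bound, your second ``alternative'' argument---conditioning on all coordinates but the one with $|z_{i_0}|\geq 1/\sqrt{m}$---is precisely the paper's approach, though you execute it more sharply: the paper only observes that with probability at least $1/2$ one has $|X|\geq 1/\sqrt{m}$, which yields $E\{|X|\}\geq 1/(2\sqrt{m})$ and explains the factor $\tfrac{1}{2}$ in the stated bound, whereas your identity $\tfrac{1}{2}(|y+a|+|y-a|)=\max(|y|,|a|)$ recovers the full $E\{|X|\}\geq 1/\sqrt{m}$. Your first argument, via the pointwise inequality $X^{2}\leq \sqrt{m}\,|X|$ combined with $E\{X^{2}\}=\|z\|_2^2=1$, is a genuinely different and cleaner route that the paper does not use; it avoids conditioning altogether and makes transparent why the constant $\tfrac{1}{2}$ is slack.
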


\begin{proof}
Since $\|z\|_2=1$, there must exist one $1\leq j\leq m$ such that $|z_j|\geq \frac{1}{\sqrt{m}}$. So no matter what $\sum\limits_{i=1,i\neq j}^{m} z_i h_i$ is, there is always a probability of one half such that $|\sum\limits_{i=1,i\neq j}^{m} z_i h_i+z_j h_j|\geq \frac{1}{\sqrt{m}}$. This leads to the lower bound.

The upper bound is from $E\{|X|\} \leq \sum\limits_{i=1}^{m} |z_i| |h_i| \leq  \sqrt{m} $.
\end{proof}

Combining Lemma \ref{cor:Bernoullis1}, Lemma \ref{lemma:Bernoulliblconcentration} and Lemma \ref{lemma:Bernoulliupperlower}, by a simple union bound over the $\gamma$-net, we have:
\begin{lemma}
Let $\beta \in (0,1)$, and let $K\subseteq \{1,2,...,n\}$ be an arbitrary subset with cardinality $\frac{|K|}{n}\leq \beta$. With a probability of at least $1-4(1+\frac{2}{\gamma})^m e^{-\frac{\epsilon^2 n^2}{2(n+m-1)m}}$, the following two statements hold true simultaneously for all the points $z$ of the $\gamma$-net $V$:
\begin{itemize}
\item  $$\frac{n}{2 \sqrt{m}} -\epsilon n \leq \|Hz\|_1;$$
\item  $$\|(Hz)_K\|_1 \leq \beta n \sqrt{m} +\epsilon n.$$
\end{itemize}
\end{lemma}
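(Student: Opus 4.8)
The plan is to combine the three preceding Bernoulli lemmas (Lemma \ref{cor:Bernoullis1}, Lemma \ref{lemma:Bernoulliblconcentration}, Lemma \ref{lemma:Bernoulliupperlower}) via a union bound over the $\gamma$-net, exactly parallel to how the Gaussian case was assembled. The two desired statements are simply the lower bound on $\|Hz\|_1$ and the upper bound on $\|(Hz)_K\|_1$ holding uniformly over all net points $v_j \in V$. First I would apply Lemma \ref{cor:Bernoullis1} to obtain the concentration $\|Hz\|_1 \geq S - \epsilon n$ for a single point $z$; then invoke the lower bound $S \geq \frac{n}{2\sqrt{m}}$ from Lemma \ref{lemma:Bernoulliupperlower} to convert this into $\|Hz\|_1 \geq \frac{n}{2\sqrt{m}} - \epsilon n$, which is the first claim for a single point. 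Symmetrically, I would apply Lemma \ref{lemma:Bernoulliblconcentration} to get $\|(Hz)_K\|_1 \leq \frac{|K|}{n} S + \epsilon n$ for a single point, then use $S \leq n\sqrt{m}$ and $\frac{|K|}{n} \leq \beta$ to bound this by $\beta n \sqrt{m} + \epsilon n$, giving the second claim for a single point.

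The second half is the union bound. Each of the two single-point events fails with probability at most $2e^{-\frac{\epsilon^2 n^2}{2(n+m-1)m}}$ by Lemmas \ref{cor:Bernoullis1} and \ref{lemma:Bernoulliblconcentration} respectively, so for a fixed net point both statements hold simultaneously with failure probability at most $4e^{-\frac{\epsilon^2 n^2}{2(n+m-1)m}}$. Since a $\gamma$-net on the unit sphere in $R^m$ can be chosen with cardinality at most $(1+\frac{2}{\gamma})^m$ (as cited earlier from \cite{Ledoux01}), a union bound over all $N \leq (1+\frac{2}{\gamma})^m$ points yields total failure probability at most $4(1+\frac{2}{\gamma})^m e^{-\frac{\epsilon^2 n^2}{2(n+m-1)m}}$, which is exactly the claimed bound. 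I would emphasize that because $m$ is a fixed constant, the net cardinality $(1+\frac{2}{\gamma})^m$ is a constant independent of $n$, whereas the exponent $\frac{\epsilon^2 n^2}{2(n+m-1)m}$ grows like $\Theta(n)$; hence the product still tends to zero and in fact the right-hand side is of the form $e^{-\Theta(n)}$ asymptotically.

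Strictly speaking there is a subtlety worth flagging: the failure probabilities in Lemmas \ref{cor:Bernoullis1} and \ref{lemma:Bernoulliblconcentration} are stated for a \emph{fixed} $z$, so the union bound is legitimate precisely because $V$ is a finite deterministic set chosen before the randomness in $H$ is revealed; I would make this independence of the net from $H$ explicit. Assembling these pieces is essentially bookkeeping, and I do not expect a genuine obstacle here — the real work was already done in the three component lemmas. The only point demanding a little care is ensuring the same $\epsilon$ (and the same deviation bound) is used consistently across both statements and all net points so that the quoted failure probability aggregates cleanly; since all three lemmas share the identical tail exponent $\frac{\epsilon^2 n^2}{2(n+m-1)m}$, this alignment is immediate and the factor of $4$ in front simply counts the two statements times the two-sided McDiarmid bound.
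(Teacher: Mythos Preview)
Your proposal is correct and matches the paper's approach exactly: the paper itself offers no proof beyond the sentence ``Combining Lemma \ref{cor:Bernoullis1}, Lemma \ref{lemma:Bernoulliblconcentration} and Lemma \ref{lemma:Bernoulliupperlower}, by a simple union bound over the $\gamma$-net,'' and your write-up carries out precisely that bookkeeping. The only minor remark is that the factor $4$ arises as $2+2$ (the union of the two failure events from the two lemmas, each already carrying a factor $2$ from the two-sided McDiarmid bound), not as a product; but your arithmetic and conclusion are correct.
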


By using the same ``zooming'' technique as in Section \ref{sec:strongthreshold}, and a simple union bound over $\binom{n}{|K|}$ possible subsets with cardinality $K$, we can get the following lemma which extends the result from the $\epsilon$-net to $\{z|\|z\|_2=1\}$.
\begin{lemma}\label{lemma:slp1}
There exist a constant $c_4>0$ and a sufficiently small constant $\beta>0$,  such that when $n$ is large enough, with probability $1-e^{-c_4n}$, the Bernoulli  Toeplitz matrix $H$ has the following property: for every $z \in R^m$ and every subset $K
\subseteq \{1,...,n\}$ with $|K| \leq \beta n$, $\sum \limits_ {i \in
\overline{K}} |(Hz)_i| - \sum \limits_{i \in K} |(Hz)_i| \geq \delta' n
$, where $\delta'>0$ is a sufficiently constant.
\end{lemma}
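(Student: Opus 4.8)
The plan is to mirror the proof of Lemma~\ref{lemma:slp} from the Gaussian case, replacing the Gaussian concentration estimates by the Bernoulli ones in Lemmas~\ref{cor:Bernoullis1}, \ref{lemma:Bernoulliblconcentration} and \ref{lemma:Bernoulliupperlower}. The starting point is the identity
\begin{equation*}
\sum_{i \in \overline{K}} |(Hz)_i| - \sum_{i \in K} |(Hz)_i| = \|Hz\|_1 - 2\|(Hz)_K\|_1,
\end{equation*}
so it suffices to show $\|Hz\|_1 - 2\|(Hz)_K\|_1 \geq \delta' n$ uniformly over $\|z\|_2=1$ and $|K|\leq \beta n$. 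First I would treat the points of the $\gamma$-net $V$. By Lemma~\ref{cor:Bernoullis1} together with the lower bound $S\geq \frac{n}{2\sqrt m}$ of Lemma~\ref{lemma:Bernoulliupperlower}, each net point satisfies $\|Hv_j\|_1 \geq \frac{n}{2\sqrt m}-\epsilon n$; by Lemma~\ref{lemma:Bernoulliblconcentration} together with $S\leq n\sqrt m$ it satisfies $\|(Hv_j)_K\|_1 \leq \beta n \sqrt m + \epsilon n$ for every $K$ with $|K|\leq\beta n$; and by the two-sided estimate of Lemma~\ref{cor:Bernoullis1} it also satisfies the upper bound $\|Hv_j\|_1\leq n\sqrt m + \epsilon n$, which I will need for the off-net extension. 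Combining the first two gives, on the net,
\begin{equation*}
\|Hv_j\|_1 - 2\|(Hv_j)_K\|_1 \geq \Big(\tfrac{1}{2\sqrt m}-3\epsilon-2\beta\sqrt m\Big)\,n.
\end{equation*}

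Next I would fix the constants in the correct order. Choose $\epsilon$ small enough that $\frac{1}{2\sqrt m}-3\epsilon>0$, then choose $\beta$ small enough that simultaneously $\frac{1}{2\sqrt m}-3\epsilon-2\beta\sqrt m>0$ and the entropy bound $H(\beta)<\frac{\epsilon^2}{2m}$ holds, where $H(\beta)=\beta\log(1/\beta)+(1-\beta)\log\frac{1}{1-\beta}$; this is possible since $H(\beta)\to 0$ as $\beta\to 0$. With these choices the union bound closes: there are at most $(1+\frac{2}{\gamma})^m$ net points and $\sum_{k\leq\beta n}\binom{n}{k}\leq e^{(H(\beta)+o(1))n}$ support sets, and each individual concentration fails with probability at most $2e^{-\epsilon^2 n^2/(2(n+m-1)m)}=2e^{-(\epsilon^2/(2m)+o(1))n}$ for fixed $m$. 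Hence the total failure probability is at most $(1+\frac{2}{\gamma})^m e^{(H(\beta)-\epsilon^2/(2m)+o(1))n}$, which decays as $e^{-c_4 n}$ with $c_4=\frac{\epsilon^2}{2m}-H(\beta)>0$.

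Finally I would extend from the $\gamma$-net to all of $\{z\mid\|z\|_2=1\}$ by the same ``zooming'' decomposition used in Lemma~\ref{lemma:slp}: write $z=\sum_{j\geq 0}\gamma_j v_j$ with $v_j\in V$, $\gamma_0=1$ and $\gamma_j\leq\gamma^j$. Summing the resulting geometric series and inserting the three net bounds above yields
\begin{equation*}
\|(Hz)_K\|_1 \leq \frac{\beta n\sqrt m+\epsilon n}{1-\gamma}, \qquad \|Hz\|_1 \geq \Big(\tfrac{n}{2\sqrt m}-\epsilon n\Big)-\frac{\gamma}{1-\gamma}\big(n\sqrt m+\epsilon n\big),
\end{equation*}
so that $\|Hz\|_1-2\|(Hz)_K\|_1 \geq \big(\frac{1}{2\sqrt m}-3\epsilon-2\beta\sqrt m\big)n - O(\gamma)\,n$; choosing $\gamma$ small enough that the $O(\gamma)$ correction does not absorb the positive margin gives the claim with some $\delta'>0$. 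The main obstacle is the balancing of constants in the middle step: the union over the $\binom{n}{\beta n}\approx e^{H(\beta)n}$ support sets must be dominated by the per-event McDiarmid decay $e^{-\epsilon^2 n/(2m)}$, forcing $H(\beta)<\frac{\epsilon^2}{2m}$, while the balance inequality on the net simultaneously requires $2\beta\sqrt m+3\epsilon<\frac{1}{2\sqrt m}$. Fixing $\epsilon$ first, then taking $\beta$ (and only afterwards $\gamma$) sufficiently small, reconciles these competing requirements because $H(\beta)$ vanishes as $\beta\to0$.
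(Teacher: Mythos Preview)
Your proposal is correct and follows essentially the same approach the paper sketches: it invokes Lemmas~\ref{cor:Bernoullis1}, \ref{lemma:Bernoulliblconcentration}, \ref{lemma:Bernoulliupperlower} on the $\gamma$-net, applies a union bound over the $\binom{n}{\leq \beta n}$ support sets, and extends to all of $\{z:\|z\|_2=1\}$ via the same ``zooming'' decomposition used in Lemma~\ref{lemma:slp}. The paper gives no more than this outline, so you have in fact supplied the details it omits---in particular, the explicit constraint $H(\beta)<\epsilon^2/(2m)$ needed for the union over support sets to beat the McDiarmid decay, and the order in which $\epsilon$, $\beta$, $\gamma$ must be chosen.
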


\begin{proof}
of Theorem \ref{thm:strongmain1}. A direct consequence of Theorem 4.2 and lemmas
4.3-4.7.
\end{proof}

\section{Numerical Evaluations}
\label{sec:numerical}
Based on Theorem \ref{thm:strongmain}, we calculate the strong thresholds in Figure \ref{fig:gaussianfixrho} for different values of $m$ by optimizing over $\mu>0$ and $\delta$. As $m$ increases, the correlation length in the matrix $H$ also increases and the corresponding correctable number of errors decreases (but the bound $\beta=\frac{k}{n}>0$ always exists).
\begin{figure}[t]
\centering
\includegraphics[width=3in, height=2.8in]{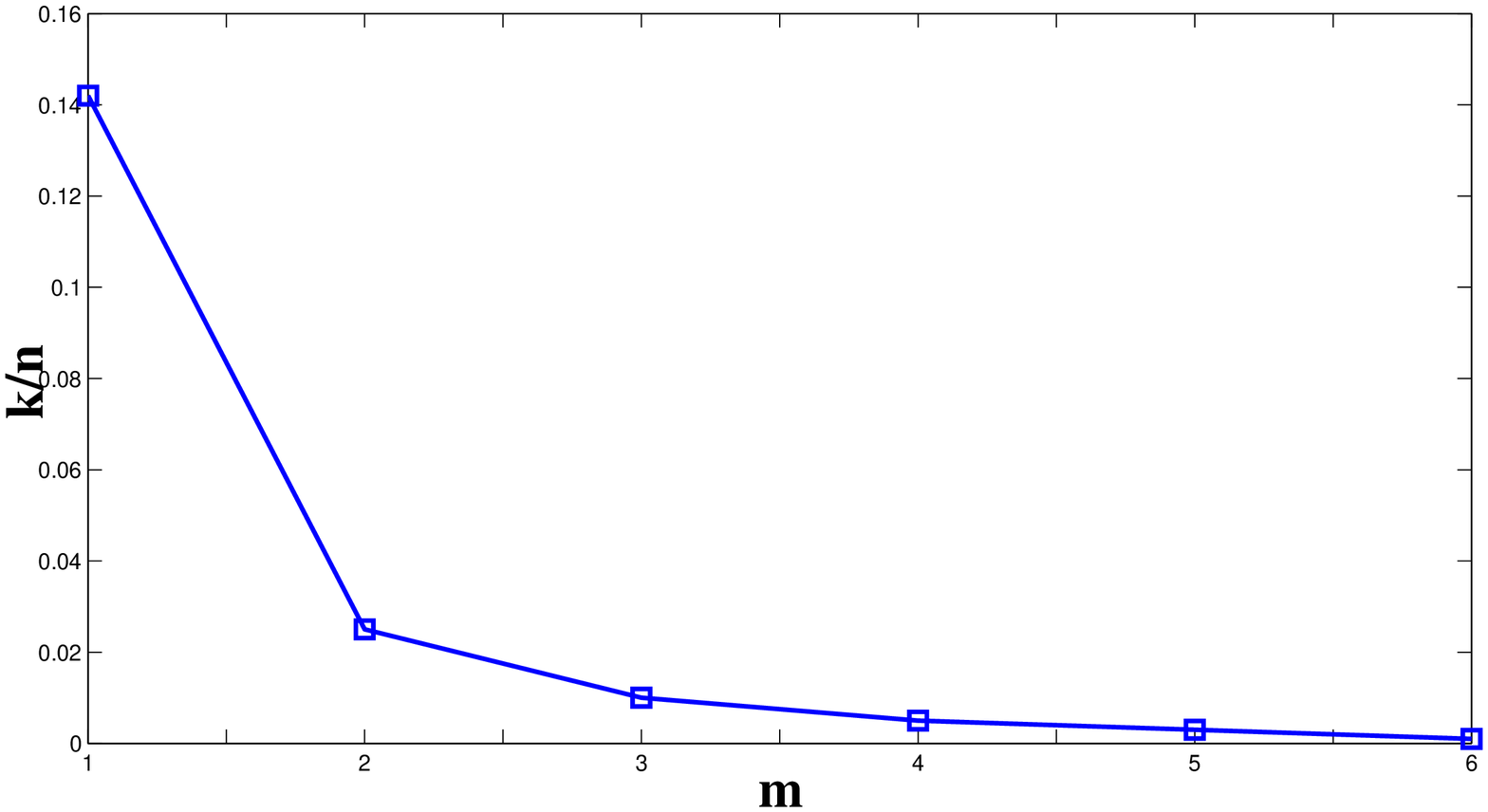}
%\end{psfrags}
\caption{Recoverable fraction of errors versus $m$}\label{fig:gaussianfixrho}
\end{figure}
\begin{figure}[t]
\centering
\includegraphics[width=3in, height=2.5in]{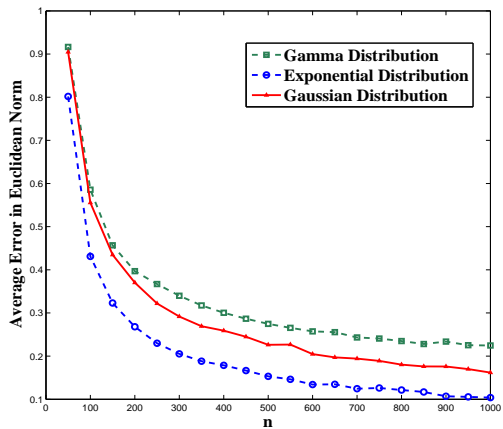}
%\end{psfrags}
\caption{With outliers and noises of different distributions}
\label{fig:threecurves}
\end{figure}

\begin{figure}[t]
\centering
\includegraphics[width=2.8in, height=2.8in]{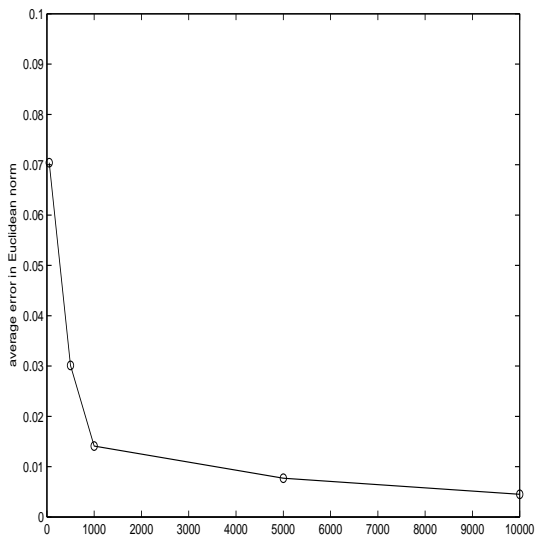}
\caption{Estimation errors }
\label{fig:together}
\end{figure}

\begin{figure}[t]
\centering
\includegraphics[width=2.8in, height=2.8in]{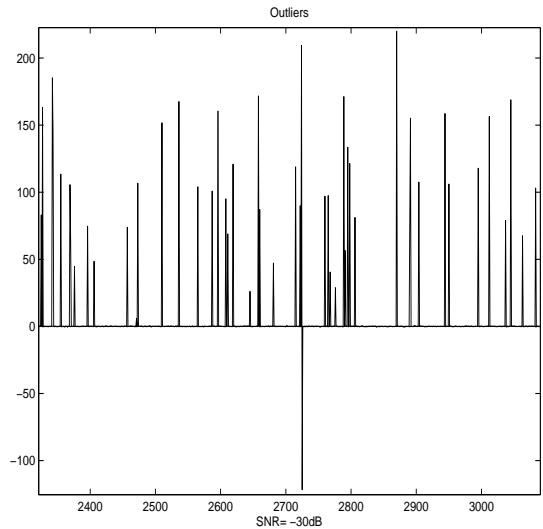}
\caption{Outliers}
\label{fig:outlier}
\end{figure}

\begin{figure}[t]
\centering
\includegraphics[width=2.8in, height=2.8in]{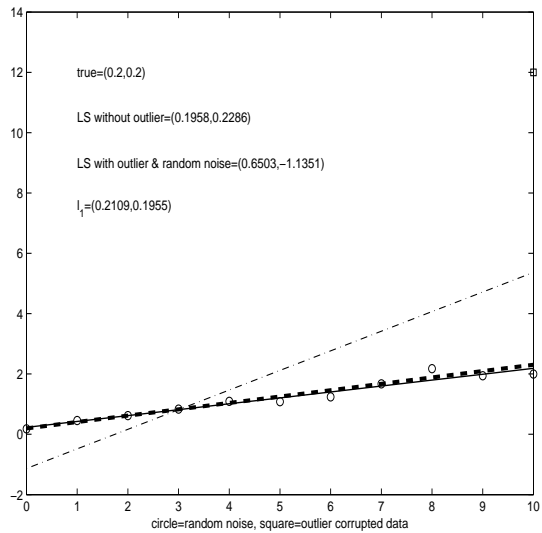}
\caption{Limited data length}
\label{fig:limited}
\end{figure}

We then evaluate in Figure \ref{fig:threecurves} the $\ell_2$-norm
error $\|\hat{\x}-\x\|_2$ of $\ell_1$ minimization for Gaussian Toeplitz
matrices under both outliers and i.i.d. observation noises of
different probability distributions: Gamma distribution with shape parameter $k=2$ and scale $\frac{1}{\sqrt{6}}$; standard Gaussian distribution $N(0,1)$; and exponential distribution with mean $\frac{\sqrt{2}}{2}$. These distributions are chosen such that the observation noises have the same expected energy. The system parameter $m$ is set to $5$ and the system state $\x$ are generated as i.i.d. standard Gaussian random variables. We randomly pick $\frac{n}{2}$ i.i.d. $N(0,100)$ Gaussian outliers with random support for the error vector $\e$. For all these distributions, the average error goes to $0$ (we also verified points beyond $n>1000$). What is interesting is that the error goes to $0$ at different rates. Actually, as implied by the proof of Theorem \ref{thm:noiseerrortogether}, for random noises with bigger probability density functions around $0$, the system identification error may be smaller.
In fact, the Gamma distribution has the worst performance because its probability density function is smaller around the origin (actually $0$ at the origin), while the exponential distribution has the best performance, because it has the largest probability density function around $0$.

 To illustrate the asymptotic convergence and
limited data length performance, we give two simulation
examples. The first example is
a five dimensional FIR system. The input $h_i$'s are an i.i.d. Gaussian
sequence of zero mean and variance $2^2$
and the random noises are i.i.d. Gaussian of zero mean and variance $0.2^2$.
The system parameter vector $\x$ is randomly generated
according to Gaussian of zero mean and unit variance.
The number of non-zero elements in the outlier vector is $rand\times0.2\times n$ where
$rand$ is the random variable uniformly in [0,1] which loosely translates
into that 10\% of the data are outliers on average.
The non-zero elements of the outliers are
randomly generated according to Gaussian of mean=100 and variance $50^2$.
The locations of the non-zeros elements are randomly assigned in each simulation.
Figure 3 shows the simulation results of the estimation error
for different $n$ which are
the average of 10 Monte Carlo runs respectively. An example of outliers
is shown in Figure 4. Since outliers are heavily biased with very large
magnitudes if non-zero, the signal to noise ratio is very small $SNR=-30dB$.
Clearly, even at this very low level of SNR,
the identification results are still very good as predicted by the analysis.
The second example is a line $y=(z,1)\begin{bmatrix}k \cr b\end{bmatrix}
+ noise$ where
the true but unknown $x=(k, b)'=(0.2,0.2)'$.
Because of noise that is an i.i.d. Gaussian of zero mean and variance $0.2^2$,
the actual data set, shown in Figure 5 as circles, is given by Table 1.
\begin{table}[t]
{\tiny
\begin{tabular}{|c|c|c|c|c|c|c|}
\hline
z & 0&1&2&3&4&5 \cr \hline
y &.1779&.4555&.6174&.8347&1.0907&1.0793 \cr \hline\hline

z& 6&7&8&9&10&~ \cr \hline
y &1.2406&1.6721&2.177&1.9386&1.9975(11.9975)&~\cr  \hline
\end{tabular}
\caption{Limited data}
}
\end{table}
First the least squares estimate $(0.1958,0.2286)'$ is calculated which is
close to the true but unknown
$x=(0.2,0.2)'$. Now, a single outlier at $z=10$ is added as a square in Figure 5.
For this single outlier, the least squares estimate becomes $(0.6503,-1.1351)'$
in dash-dot line, way off the true but unknown $x$. Finally, the $\ell_1$
optimization as discussed in the paper is applied to have
an estimate $(0.2109,0.1955)'$ in bold-dash line in Figure 5.
Virtually, the $\ell_1$ optimization
eliminates the effect of the outlier even for a very limited data case.

\section{Concluding Remarks}
\label{sec:conclusion}
We have re-visited the least absolute deviation estimator or
$\ell_1$ minimization from a compressed sensing point of view and shown that the
exact system parameter vector can be
recovered as long as the outlier is sparse in the sense that the number of
non-zero elements has to be bound by $\beta n$ for some $\beta <1$. No any
probabilistic assumption is imposed on the unknown
outlier. Further, in the presence of both the outliers and random noises, the system
parameter vector can still be recovered if some mild conditions on the
random noises are met.

\end{document}